\newtheorem{remark}{Remark}
\newtheorem{theorem}{Theorem}
\newtheorem{prop}{Proposition}
\newtheorem{coro}{Corollary}
\begin{document}
\title{On Secure Uplink Transmission in Hybrid RF-FSO Cooperative Satellite-Aerial-Terrestrial Networks}

\author{Yuanyuan Ma,~Tiejun~Lv,~\IEEEmembership{Senior Member,~IEEE}, Gaofeng Pan,~\IEEEmembership{Senior Member,~IEEE},~Yunfei~Chen,~\IEEEmembership{Senior Member,~IEEE},~and~Mohamed-Slim Alouini,~\IEEEmembership{Fellow,~IEEE}

\thanks{Manuscript received November, 29, 2021; revised April 6, 2022 and August, 11, 2022; accepted October 9, 2022. This work was supported by the National Natural Science Foundation of China (NSFC) under Grant 62271068 and 62171031. (\emph{corresponding author: Tiejun Lv}).}
\thanks{Y. Ma and T. Lv are with the School of Information and Communication Engineering, Beijing University of Posts and Telecommunications, Beijing 100876, China (e-mail: lvtiejun@bupt.edu.cn).}% <-this % stops a space
\thanks{G. Pan is with the School of Cyberspace Science and Technology, Beijing Institute of Technology, Beijing 100081, China.}
\thanks{Y. Chen is with the School of Engineering, University of Warwick, Coventry CV4 7AL, U.K.}
\thanks{M.-S. Alouini is with the Computer, Electrical, and Mathematical Sciences and Engineering Division, King Abdullah University of Science and Technology (KAUST), Thuwal 23955-6900, Makkah Province, Saudi Arabia.}
}

%\markboth{IEEE Transactions on ,~Vol.~XX, No.~XX, XXX~2020}
{}

\maketitle

\begin{abstract}
This work investigates the secrecy outage performance of the uplink transmission of a radio-frequency (RF)-free-space optical (FSO) hybrid cooperative satellite-aerial-terrestrial network (SATN). Specifically, in the considered cooperative SATN, a terrestrial source (S) transmits its information to a satellite receiver (D) via the help of a cache-enabled aerial relay (R) terminal with the most popular content caching scheme, while a group of eavesdropping aerial terminals (Eves) trying to overhear the transmitted confidential information. Moreover, RF and FSO transmissions are employed over S-R and R-D links, respectively. Considering the randomness of  R, D, and Eves, and employing a stochastic geometry framework, the secrecy outage performance of the cooperative uplink transmission in the considered SATN is investigated and a closed-form analytical expression for the end-to-end secrecy outage probability is derived. Finally, Monte-Carlo simulations are shown to verify the accuracy of our analysis.
\end{abstract}

\begin{IEEEkeywords}
RF-FSO system, satellite-aerial-terrestrial network (SATN), secrecy outage probability (SOP), uplink transmission,  wireless caching.
\end{IEEEkeywords}

\IEEEpeerreviewmaketitle

\section{Introduction}
\global\long\def\figurename{Fig.}

Satellite communication has attracted great attention due to its inherent characteristics, e.g., high capability of seamless connectivity and wide coverage \cite{Guo_SOP,Power_Control_Cognitive_Satellite,Lin_BF_rate}.
More recently, it is becoming an important enhancement of the six-generation (6G) systems to support the exponentially increasing data demand and variety of users across the world, since satellite communication can be widely applied in mass broadcasting, navigation, and disaster relief operations \cite{6G1,6G2,Ye4,Ye5}.

%satellite communication has been considered as a promising complement approach
However, direct communication links between the satellite and the terrestrial terminals may not always be  available, due to deep fading (e.g., the shadowing created by buildings and mountains) \cite{Pan_sate}.  Thus, aerial relays, like unmanned aerial vehicle (UAV) and high/low altitude platforms, have been regarded as an alternative and promising solution to extend and improve satellite-terrestrial communications. One of the advantages of aerial relays is their flexibility for quick deployment. Thus they have been used to substitute for traditional on-ground base stations, because of emergency or lack of terrestrial infrastructure coverage \cite{secure_Tingting_LI}.
The cooperative satellite-aerial-terrestrial network (SATN), which can effectively mitigate the impacts of deleterious masking effect in satellite links, has attracted a significant amount of attention \cite{Tian_Yu,UAV_pan,Pan_sate,Xiang_Li_UAV,Satellite-Aerial_Huang,Lin_IOT}. %\cite{Pan_3,HSTN_Liu,EE_ref_Z_Lin2,Tian_Yu}.
%The outage performance and diversity gain of three hybrid automatic repeat request schemes were respectively studied in \cite{Pan_3}.
%\cite{HSTN_Liu} formulated a beamforming optimization problem to maximize the achievable rate of  HSTN  while considering the impact of both nonlinear power amplifier and large-scale channel state information at the transmitter on beamforming.
%The achievable secrecy-energy efficiency of the earth station under imperfect wiretap channel state information are maximized in the secure communication of rate-splitting multiple access based on the cognitive HSTN in \cite{EE_ref_Z_Lin2}.
Reference \cite{Tian_Yu} derived the coverage probability of a dual-hop cooperative satellite-UAV communication system.
%So far, there are plenty of researches presented to design and study satellite-terrestrial communications with the help of the aerial vehicle.
Reference  \cite{UAV_pan} considered a cooperative satellite-aerial-terrestrial system and derived the
approximate analytical expressions for the coverage probability over the relay-terrestrial-receiver link and the
%\textcolor{blue}{
end-to-end
%}
outage probability (OP) in both non-interference and interference scenarios.
The UAV trajectory and in-flight transmit power were jointly optimized by using a typical composite channel model including both large-scale and small-scale fading  \cite{Xiang_Li_UAV}.
%\cite{Sharma_uav_cach}   analyzed the outage probability of the cooperative SATN where a multiantenna satellite communicates with a ground user equipment with the help of multiple cache-enabled mobile  UAV relays.
The energy efficiency of the considered cooperative SATN with beamforming schemes was analytically presented in \cite{Satellite-Aerial_Huang}, where a multi-antenna UAV is employed as a relay to assist the satellite signal delivery.
% Besides, a handful of works have investigated the performance of general HSTNs by incorporating the fundamental amplify-and-forward (AF)/decode-and-forward (DF) relaying techniques.
To maximize the sum rate of cooperative SATN  with rate-splitting multiple access, \cite{Lin_IOT} proposed an iterative penalty function-based algorithm to support massive access of Internet-of-Things devices and achieve the desired performances of interference suppression, spectral efficiency, and hardware complexity.

Furthermore, wireless caching \cite{caching_wire} has emerged as an effective paradigm where some contents are prefetched locally by the network nodes in their installed storage during off-peak
hours.  Generally, there are two fundamental caching schemes, namely most popular content (MPC) caching and uniform content (UC) caching \cite{caching_wire}. While the UC caching achieves the largest
content diversity gain,  the MPC caching achieves the largest cooperative diversity gain \cite{caching_wire,Sharma_uav_cach,An_2021}. The
%\textcolor{blue}{
end-to-end
%}
 OP of a cooperative SATN was evaluated in \cite{Sharma_uav_cach}, considering the fundamental MPC and UC caching schemes at UAV relays.
The OP and hit probability of the cache-enabled cooperative SATN were derived in \cite{An_2021}, taking into account the uncertainty of the number and location of the aerial node. Compared to the  UC scheme, the MPC scheme is widely used with a high hit rate.

On the other hand, the information security is important in wireless  communication due to the open access in wireless mediums.
There are extensive works that have characterized SOP and ergodic secrecy capacity (ESC) for passive eavesdropping case and active eavesdropping case \cite{Pan_sate,secure_Tingting_LI,Tian_Yu,UAV_pan,ESC_JXue}, where passive eavesdropping is the most common in practice.
Moreover, free-space optical (FSO) links have been presented as an ideal alternative to the conventional radio frequency (RF) links for secure satellite systems, because the laser beam has high directionality for security \cite{laser_lei}. By utilizing relaying technology, the hybrid RF-FSO systems combine both the advantages of the RF and FSO communication technologies \cite{HJLei_fso,FSO_RF_TWC,RF_FSO_open}. Dual-hop  RF-FSO systems are designed to overcome atmospheric turbulence and other factors limiting the applications of FSO systems \cite{HJLei_fso,FSO_RF_TWC} and performances of RF-FSO systems in term of
%\textcolor{blue}{
end-to-end
%}
OP, average symbol error rate, and ergodic capacity (EC), have already  been vastly studied and analyzed \cite{RF_FSO_open,Zedini_TWC,Varshney}. The physical layer security of the  RF-FSO systems was investigated in \cite{laser_lei,HJLei_fso}. \cite{laser_lei} analyzed secrecy outage performance of a mixed RF-FSO transmission system with imperfect channel state information (CSI), where an eavesdropper wiretaps the confidential information by decoding the received signal.  The secrecy outage performance of a hybrid RF-FSO downlink system was analyzed in \cite{HJLei_fso} where energy harvesting technology was considered over RF links.
%Considering the dual-hop RF-FSO communication link, novel closed-form expressions for the
%%\textcolor{blue}{
%end-to-end
%%}
%per-frame average OP, bit error rate, and EC were derived \cite{Varshney}.
%laser_lei
%laser_lei
% analyze a secure dual-hop mixed radio
%frequency-free space optical (RF-FSO) downlink simultaneous
%wireless information and power transfer system.

%\cite{Lei_2020}
% analyze the secrecy outage performance of a dual-hop relay system composed of multipleinput-multiple-output radio-frequency (RF) links and a freespace optical (FSO) link while a multiple-antenna eavesdropper
%wiretaps the confidential information by decoding the received
%signals from the source node.
%, the bound of the effective secrecy throughput (EST)
%with different TAS schemes are derived.

Most of the authors of the aforementioned works focus on the downlink transmission performance of satellite-terrestrial/SATN systems, while the uplink transmission performance of cooperative SATN has not been extensively studied.
However, the study of the security of the uplink transmission is also important and meaningful, due to the following reasons.

Firstly,   in real-time communication scenarios, satellites fundamentally play the role of a space relay/forwarder to connect two or more remote terrestrial terminals, while in non real-time communication scenarios, satellites usually collect data reported and transmitted by the sensors at any time  and forward the  collected data to data center using dedicated channel \cite{9475123}. This leads to a fact that the information security over the uplink is equally important or more worthy of study. Secondly, thinking of the inherent openness of wireless medium, the three-dimensional coverage space of the uplink transmission is much larger than that of the downlink transmission, resulting from the cut of  the floor plane or not. It means that, compared with the downlink transmission, there is a larger distribution space for the eavesdroppers in the uplink transmission scenarios, resulting in a tougher challenge to shield the information delivered over the uplink.

%In other words, most of the data transmitted over the downlink are first received from the uplink and then delivered over the downlink, leading to a fact that the information security over the uplink is equally important and worth investigating.
%}
%
%\textcolor{blue}{
%Secondly, it is more important to protect the information security over the uplink in some application scenarios of satellite-terrestrial  communications. For example, in the Internet of Things systems , it is more essential to safeguard the collected data reported and transmitted by the sensors compared with the control information (e.g., initializing, steering, and acquiring signals) delivered over the downlink \cite{up_down}.
%}
%
%\textcolor{blue}{
%Thirdly, thinking of the inherent openness of wireless medium, the three-dimensional coverage space of the uplink transmission is much larger than that of the downlink transmission, resulting from the cut of  the floor plane or not. It means that, compared with the downlink transmission, there is a larger distribution space for the eavesdroppers in the uplink transmission scenarios, resulting in a tougher challenge to shield the information delivered over the uplink.

Motivated by these observations, we built an  RF-FSO hybrid cooperative SATN with multiple eavesdropping aerial terminals (Eves) and one cache-enabled aerial relay (R), in which RF and FSO transmissions are employed over terrestrial source (S)-R and R-satellite receiver (D) links, respectively, to exploit their inherent merits, e.g., the wide capability provided by RF transmission and the highly directive nature of laser beam offered by FSO techniques.
Differing from conventional studies on downlink transmissions, we investigate the outage performance of the uplink transmission in the considered SATN.
Moreover, considering the randomness of the positions of  R, D, and Eves, modeled by using stochastic geometry, the closed-form expression for the end-to end secrecy outage probability (SOP) of the cooperative uplink transmission in the considered SATN is obtained.
In particular, the investigation of the randomness of the satellite's position has practical reason and great significance.
With the  more and more intensive distribution of satellites and under the condition of unknown satellites' trajectories   and distributions, the connectable satellites are assumed to be
uniformly distributed in  a part of the spherical cone with known altitude range \cite{Pan_sate} or
uniformly distributed on a spherical doom with known altitude, which is called as randomness of the satellite's position.
On one hand, the study on the randomness of the satellite's position can offer the secrecy performance  with unknown trajectories and distributions of satellites.
On the other hand, the randomness analysis can give some hints with known satellite constellation.
%Specifically, in the considered cooperative SATN, a terrestrial source (S) delivers its information to a satellite receiver (D) with the aid of a cache-enabled aerial relay (R) using the MPC caching scheme.
%At the same time, a group of aerial Eves try to overhear the transmitted confidential information when the information is delivered over the cooperative S-R-D uplink.
%Moreover, RF and FSO transmissions are adopted over S-R and R-D links, respectively, to exploit their inherent merits, e.g., the wide coverage provided by RF transmission and the highly directive nature of laser beam offered by FSO techniques. Considering the randomness of R, D, and Eves, modeled by stochastic geometry, the secrecy outage performance of the cooperative uplink transmission in the considered SATN is studied.

The main contributions of this paper are summarized as follows:
\begin{itemize}
\item  We built an  RF-FSO hybrid cooperative SATN with multiple Eves and a cache-enabled R, and investigated the outage performance of the uplink transmission in the considered SATN.
\item We derived a closed-form  expression for the lower bound of the end-to-end SOP over the cooperative uplink in the considered SATN,  considering the small scale fading of two types of channels, the cache-enabled R, and the randomness of the positions of R, D, and Eves.
%\item  A closed-form  expression for the OP over R-D link is presented while considering that the satellite is randomly distributed;
\item  The impacts of the system factors,  including the channel fading parameters,  the transmit power and the number of antennas at R, etc, have been studied and discussed,  which can facilitate researchers to design systems considering both reliability and exiting constrains.
\end{itemize}

The rest of this paper is organized as follows. In Section II, the considered hybrid RF-FSO cooperative SATN is described. Section III and IV analyze the SOP of S-R link and the OP of R-D link, respectively. In Sections V, the end-to-end SOP of the RF-FSO cooperative SATN is analyzed. In Section VI, numerical results are presented and discussed. Finally, Section VII concludes the paper.

\section{System Descriptions}
\subsection{System Model}
\begin{figure}[!h]
\setlength{\abovecaptionskip}{0pt}
\setlength{\belowcaptionskip}{10pt}
\centering
\includegraphics[width=3.0 in]{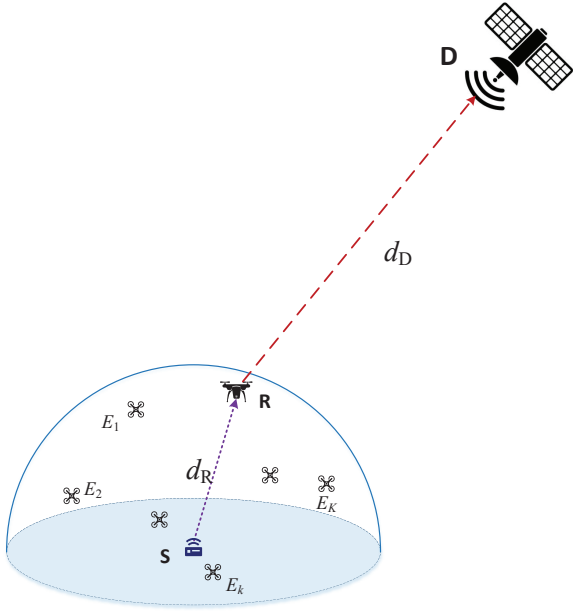}
\caption{The uplink transmission of the RF-FSO hybrid cooperative SATN with a cache-enabled R and
spatially random eavesdroppers}
\label{system}
\end{figure}

Consider a hybrid RF-FSO cooperative SATN, which consists of a terrestrial source (S), a cache-enabled aerial relay (R), a satellite receiver (D), and a group of aerial eavesdroppers ($\mathrm E_{k},1\leq k\leq K$), as shown in Fig. \ref{system}. Specifically, the S-R and the S-Eves links with RF transmission experience independent and identical Nakagami-$m$ fading, while the R-D link with FSO transmission follows a unified Gamma-Gamma fading. S transmits its information to D via the help of R with the MPC caching scheme, while a group of Eves try to overhear the transmitted confidential information.

Here, we assume that R is equipped with $L \ge 2$  antennas and that maximum ratio combining (MRC) scheme is employed to process the received signals to achieve the maximum instantaneous signal-to-noise ratio (SNR), while S and each Eve are all equipped with a single antenna for simplicity\footnote{
When S is equipped with multiple antennas and the MRC scheme is implemented at R,
 a best antenna can be chosen to transmit the information bits to achieve the best SNR by using transmit antenna selection (TAS) scheme \cite{TAS_MRC}. Moreover, the considered eavesdropping scenario with multiple single-antenna eavesdroppers is actually equivalent to the case of a multi-antenna eavesdropper.}. Furthermore, the omnidirectional transmission antenna is assumed to be employed at S.

\subsection{Channel Model for RF/FSO Link}
\textit{1) S-R RF Link}\\
The fading amplitudes of links $\mathrm S\rightarrow \mathrm R_{l}$, $S\rightarrow \mathrm E_{k}$,
which describe the channel fading between $\mathrm S$ and the $l$-th antenna of $\mathrm R$, $\mathrm S$ and the $k$-th  Eve, are denoted by
$h_q$, where $q=\left\{ \mathrm{SR}_{l},\mathrm{SE}_{k}\right\}$.

Consequently, the channel power  $g_{q}=\left|h_{q}\right|^{2}$ are Gamma distributed with probability density function (PDF) and cumulative density function (CDF)
\begin{align}\label{Nakagami_PDF}
f_{g_{q}}\left(x\right)=\frac{\lambda_{q}^{m_{q}}}{\Gamma\left(m_{q}\right)}x^{m_{q}-1}\exp\left(-\lambda_{q}x\right)
\end{align}
and
\begin{align}\label{Nakagami_CDF}
F_{g_{q}}=\frac{\gamma\left(m_{q},\lambda_{q}x\right)}{\Gamma\left(m_{q}\right)}
,
\end{align}
respectively, where $\lambda_{q}=\frac{m_{q}}{\Omega_{q}}$, $m_{q}$ and $\Omega_{q}$ denote the fading severity and the average channel power, respectively, $\Gamma\left(.\right)$ and $\gamma\left(.,.\right)$ are
the Euler Gamma function
\cite[Eq. (8.310.1)]{Gradshteyn}  and the lower incomplete Gamma function \cite[Eq. (8.350.1)]{Gradshteyn}, respectively. For an integer $m_{q}$, \eqref{Nakagami_CDF}
can be written as \cite[Eq. (8.352.1)]{Gradshteyn}
\begin{align}\label{g_q}
F_{g_{q}}\left(x\right)=1-\exp\left(-\lambda_{q}x\right)\sum_{k=0}^{m_{q}-1}\frac{\lambda_{q}^{k}x^{k}}{k!}
.
\end{align}
We also assume that the channels between S and each antenna of
R,  channels between  S and each Eve experience independent Nakagam-$m$ fading. For simplicity, let $m_\mathrm{R}$ and $\Omega_\mathrm{R}$ respectively denote the fading severity and the average channel power between S and each  antenna of R, $m_\mathrm{E}$ and $\Omega_\mathrm{E}$ respectively denote the fading severity and the average channel power between S and each  Eve.

Meanwhile, when the MRC scheme is implemented at $R$, for a $(1,L)$ MRC system with a single transmit antenna and $L$ receive antennas in Nakagam-$m$ fading channels, the PDF and CDF of the combined channel power $h_\mathrm{SR}$ can be shown as \cite{MRC_Nakagami_1,MRC_Nakagami_2}
\begin{align}\label{MRC_Nakagami_PDF}
f_{\left\Vert h_\mathrm{SR}\right\Vert ^{2}}\left(x\right)=\frac{\lambda_\mathrm{R}^{Lm_\mathrm{R}}}{\Gamma\left(Lm_\mathrm{R}\right)}x^{Lm_\mathrm{R}-1}\exp\left(-\lambda_\mathrm{R}x\right)
\end{align}
and
\begin{align}\label{MRC_Nakagami_CDF}
F_{\left\Vert h_\mathrm{SR}\right\Vert ^{2}}\left(x\right)&=\frac{\gamma\left(Lm_\mathrm{R},\lambda_\mathrm{R}x\right)}{\Gamma\left(Lm_\mathrm{R}\right)}\notag\\
&
=1-\exp\left(-\lambda_\mathrm{R}x\right)\sum_{k=0}^{Lm_\mathrm{R}-1}\frac{\lambda_\mathrm{R}^{k}x^{k}}{k!}
,\end{align}
respectively, where $\lambda_\mathrm{R}=\frac{m_\mathrm{R}}{\Omega_\mathrm{R}}$.

\textit{2) R-D FSO Link\\ }
In this work, the FSO communication link between R and D follows the Gamma-Gamma distribution, which accounts for pointing errors and type of detection techniques \cite{Gamma-Gamma,FSO}. The PDF $f_{\gamma_\mathrm{D}}\left(x\right)$ and CDF $F_{\gamma_\mathrm{D}}\left(x\right)$ of the instantaneous SNR at D  $\gamma_\mathrm{D}$ are given as \cite{HJLei_fso}
\begin{align}
f_{\gamma_\mathrm{D}}\left(x\right)=Ax^{-1}G_{1,3}^{3,0}\left[Bx^{\frac{1}{r}}|\begin{array}{c}
\xi^{2}+1\\
\xi^{2},a,b
\end{array}\right]
\end{align}
and
\begin{align}\label{CDF_FSO}
F_{\gamma_\mathrm{D}}\left(x\right)=IG_{r+1,3r+1}^{3r,1}\left[\rho x|\begin{array}{c}
1,K_{1}\\
K_{2},0
\end{array}\right],
\end{align}
respectively, where
$A=\frac{\xi^{2}}{r\Gamma\left(a\right)\Gamma\left(b\right)}$, $B=\frac{hab}{\sqrt[r]{\Omega_\mathrm{D}}}$,
$I=\frac{\xi^{2}r^{a+b-2}}{\left(2\pi\right)^{r-1}\Gamma\left(a\right)\Gamma\left(b\right)}$,
$\rho=\frac{\left(hab\right)^{r}}{\Omega_\mathrm{D}r^{2r}}$, $K_{1}=\Delta\left(r,\xi^{2}+1\right)$,
$K_{2}=\left[\Delta\left(r,\xi^{2}\right),\Delta\left(r,a\right),\Delta\left(r,b\right)\right]$,
in which the parameters $a$ and $b$ are used to represent the severity
of fading/scintillation due to the atmospheric turbulence conditions, $r$ represents the
detection scheme used at D, i.e. $r = 1$ for heterodyne detection (HD) and $r = 2$ for  intensity modulation with direct detection (IM/DD),
$\xi$ is the ratio of the equivalent beam radius to the standard deviation of the pointing
error displacement (jitter) at the FSO receiver \cite{FSO_RF_TWC}, $\Omega_\mathrm{D}$
represents the average electrical SNR of the FSO link, $\Delta\left(k,a\right)=\frac{a}{k},\frac{a+1}{k},\cdots,\frac{a+k-1}{k}$,
$h=\frac{\xi^{2}}{\xi^{2}+1}$, and $G_{g,q}^{m,n}\left[\cdot\right]$
is Meijer\textquoteright s G-function, as defined by  \cite[Eq. (9.301)]{Gradshteyn}.

\subsection{Signal Model}
Since the modulation schemes over RF and FSO links may not be the same in most cases, here decode-and-forward (DF) relaying  scheme is considered at R \cite{V2V_TVT}.

In the first phase, at time $t$, S transmits its information bits $x_\mathrm s (t)$ (satisfying $\mathbf{E}\left\{ \left\Vert x_\mathrm{s}\left(t\right)^{2}\right\Vert \right\} =1$) to R. Then, the received signal at R and $E_k$
can be given as
\begin{align}
y_\mathrm{R}\left(t\right)=\mathbf{h}_\mathrm{SR}\sqrt{P_\mathrm{S}d_\mathrm{R}^{-\eta_1}}x_\mathrm s\left(t\right)+n_\mathrm{R}
\end{align}
and
\begin{align}
y_{\mathrm E k}\left(t\right)={h}_{\mathrm{SE}_{k}}\sqrt{P_{S}d_{\mathrm E k}^{-\eta_1}}x_\mathrm s\left(t\right)+n_\mathrm{E},
\end{align}
where
$P_\mathrm{S}$ is the transmit power at S,
$n_\mathrm{R} \sim \mathcal{CN}(0,N_\mathrm R)$ and $n_\mathrm{E} \sim \mathcal{CN}(0,N_\mathrm E)$ are the Gaussian noise at R and E, $d_\mathrm{R}$ and $d_{\mathrm E k}$ denote the distance between  S to  R, S to  the $k$-th E, respectively, $\eta_1>0$ denotes the path-loss factor.

Thus, the SNR at R and  $\mathrm E_k$ can be written as
\begin{align}\label{gamma_r}
\gamma_\mathrm{R}=\frac{P_\mathrm{S}\left\Vert \mathbf{h}_\mathrm{SR}\right\Vert ^{2}}{N_\mathrm{R}d_\mathrm{R}^{\eta_1}}
\end{align}
and
\begin{align}
\gamma_{\mathrm E k}=\frac{P_\mathrm{S}\left\Vert {h}_{\mathrm{SE}_{k}}\right\Vert ^{2}}{N_\mathrm{E}d_{\mathrm Ek}^{\eta_1}},
\end{align}
respectively.
If the $\gamma_\mathrm{R}$ is greater than a threshold $\gamma_{\mathrm{hold}}$ ($\gamma_{\mathrm{hold}} > 0$) \cite{DF_threshold}, R successfully decodes the signal $x_\mathrm{s}$, otherwise,  R  will not be able to successfully decode the signal $x_\mathrm{s}$.

%\begin{align}
%&{F_{ {{\gamma_{E^*}}_{E}} }}\left( x \right) \notag\\
%= &\Pr \left\{ {\max \left\{ {\left\| {{{\bf h}_{B1}}} \right\|_2^2,\left\| {{{\bf h}_{B2}}} \right\|_2^2, \cdots ,\left\| {{{\bf h}_{B{L_B}}}} \right\|_2^2} \right\} \le x} \right\} \notag\\
% =& \Pr \left\{ {\left\| {{{\bf h}_{B1}}} \right\|_2^2 \le x} \right\} \cdots \Pr \left\{ {\left\| {{{\bf h}_{B{L_B}}}} \right\|_2^2 \le x} \right\} \notag\\
% = &{\left[ {1 - \sum\nolimits_{\ell = 0}^{{L_S} - 1} {\frac{{{x^\ell}}}{{\ell!}}} \exp \left( { - x} \right)} \right]^{{L_B}}}.
%\end{align}
%Thus, the probability that  $\mathrm{R}$  cannot successfully decode the signal $x$ can be expressed as
%\begin{align}
%P_{\mathrm{fail}}^{DF}=\Pr\left\{ \gamma_{D}\leq\gamma_{\mathrm{hold-DF}}\right\}.
%\end{align}
%%%%%%%%%%%%%%%%   shortest distance   strongest Eve %%%%%%%%%%%%%%
Considering large-scale path-loss and symmetric fading channels, the strongest Eve is the Eve nearest to S among all $K$ Eves. The shortest distance between S and Eves is denoted by $d_\mathrm E=\min\{d_{\mathrm E 1}, d_{\mathrm E 2}, \cdots, d_{\mathrm E K}\}$.
The SNR at the strongest Eve, denoted as $\mathrm{E}^*$, is given by
\begin{align}\label{gamma_e}
\gamma_\mathrm{E}=\frac{P_\mathrm{S}\left\Vert {h}_{\mathrm{SE}^{*}}\right\Vert ^{2}}{N_\mathrm{E}d_\mathrm{E}^{\eta_1}},
\end{align}
where ${h}_{\mathrm{SE}^{*}}$ is the channel fading between S and the strongest Eve.

In the second phase, R sends optical signal $x_\mathrm r$, which is the correctly decoded signals from S and satisfies $\mathbf{E}\left\{ \left\Vert x_\mathrm{r}\left(t\right)^{2}\right\Vert \right\} =1$, to D through the FSO link, where the pointing loss and scattering loss are not considered \cite{fso_DL_TWC}. Meanwhile, after conducting optical-to-electrical conversion with photo detector, the output electrical signal at D can be expressed as
\begin{align}
y_\mathrm{D}\left(t\right)=\zeta\sqrt{\frac{P_\mathrm{R}}{\mathcal{L}_{\mathrm{FS}}}}\mathcal{L}_\mathrm{r}I_{\mathrm{fso}}x_\mathrm{r}\left(t\right)
+n_\mathrm{D}\left(t\right)
,
\end{align}
where $P_\mathrm{R}$ denotes the transmit power at R and $\zeta$ denotes the optical-to-electrical conversion coefficient,
 $\mathcal{L}_{\mathrm{FS}}=\left(\frac{4\pi f_\mathrm c d_\mathrm{D}}{c}\right)^{2}$
is the free space path loss, in which $d_{D}$ is the distance between R and D, $c$ is the velocity
of light and $f_c$ is the carrier wavelength of transmitted signal from R,
while $\mathcal{L}_\mathrm{r}=\frac{1}{2}\left(G_\mathrm{t}+G_\mathrm{r}-A_\mathrm{Atm}-A_\mathrm{lenses}-A_\mathrm{mar}\right)$
$[\mathrm{dB}]$ with  $G_\mathrm{t}$, $G_\mathrm{r}$, $A_\mathrm{Atm}$, $A_\mathrm{lenses}$
and $A_\mathrm{mar}$ being, respectively, transmitter gain, receiver gain, atmospheric attenuation, lenses losses, and system margin \cite{fso_DL_TWC}.  Besides, $I_{\mathrm{fso}}$ represents
the channel fading coefficient of the FSO link and $n_\mathrm{D}\left(t\right)$ represents additive white Gaussian noise (AWGN) with zero mean and
variance $\sigma_\mathrm{d}^{2}$ at D. As a result, the instantaneous SNR of the FSO link can be expressed as
\begin{align}
\gamma_\mathrm{D}=\frac{P_\mathrm{R}\zeta^{2}\mathcal{L}_\mathrm{r}^{2}I_{\mathrm{fso}}^{2}}{\mathcal{L}_{\mathrm{FS}}\sigma_\mathrm{d}^{2}}
\overset{\Delta}{=}\Omega_{D}I_{\mathrm{fso}}^{2},
\end{align}
where
\begin{align}\label{rs_ave}
\Omega_\mathrm{D}=\frac{P_{R}\zeta^{2}\mathcal{L}_\mathrm{r}^{2}}{\mathcal{L}_{\mathrm{FS}}\sigma_\mathrm{d}^{2}}
\end{align}
is the average received SNR at D.

The secrecy performance of the considered cooperative SATN is based on the  SOP over the S-R link in the presence of Eves, and  the OP over the R-D link, all of which are important and indispensable. Next we will analyze them in Sec. III and Sec IV, respectively.

\section{SOP Analysis for S-R Link}
\subsection{Preliminaries}
%\subsection{PDF of $d_E$ }
As depicted in Fig. 2, the coverage space of S is a hemisphere with the center $\mathrm{O}$ and radius $R_\mathrm S$, when the omnidirectional transmission antenna is assumed to be employed at S. To address the randomness of the positions of Eves, we put S at the origin O and it is assumed that all Eves are uniformly distributed in the coverage space of S  to eavesdrop the information delivery between S and R.
%During the uplink transmission stage, $\mathrm{E}$ is uniformly distributed in the coverage space of S.
Therefore, the PDF and CDF of the distance between S and ${E_k}$ can be written as \cite{UAV_pan}
\begin{align}
f_{d_{\mathrm E k}}\left(x\right)=\begin{cases}
\frac{3x^{2}}{R_{S}^{3}}, &\mathrm{if}~0\leq x\leq R_\mathrm{S};\\
0, & \mathrm{else}
\end{cases}
\end{align}
and
\begin{align}
F_{d_{\mathrm E k}}\left(x\right)=\begin{cases}
0, & \mathrm{if}~x<R_\mathrm{S};\\
\frac{x^{3}}{R_\mathrm{S}^{3}}, &\mathrm{if}~0\leq x\leq {R_\mathrm{S}};\\
1, & \mathrm{else}
\end{cases},
\end{align}
respectively.

Further, the CDF and the PDF of $d_\mathrm E$ can also be written as
\begin{align}
{F_{{d_\mathrm{E}}}}\left( {{d_\mathrm E}} \right) &= \Pr \left\{ {\min \left\{ {{d_1},{d_2}, \cdots ,{d_K}} \right\} \le {d_\mathrm E}} \right\} \notag\\
& = 1 - \Pr \left\{ {\min \left\{ {{d_1},{d_2}, \cdots ,{d_K}} \right\} > {d_\mathrm E}} \right\} \notag\\
& = 1 - \Pr \left\{ {{d_1} > {d_\mathrm E}} \right\} \cdots \Pr \left\{ {{d_K} > {d_\mathrm E}} \right\} \notag\\
& = 1 - {\left( {1 - \frac{d_\mathrm E^{3}}{R_\mathrm{S}^{3}}} \right)^K}
\end{align}
and
\begin{align}\label{dE_min_PDF}
{f_{{d_\mathrm E}}}\left( {{d_\mathrm E}} \right) = \frac{{\partial {F_{{d_\mathrm E}}}\left( {{d_\mathrm E}} \right)}}{{\partial {d_\mathrm E}}} = K{\left( {1 - \frac{d_\mathrm E^{3}}{R_{S}^{3}}} \right)^{K - 1}}\frac{{3{d_E}^2}}{{{R_\mathrm S^3}}},
\end{align}
respectively, where $0 \le {d_\mathrm E} \le R_\mathrm S$.

By using Jacobian matrix method \cite{Papoulis}, the PDF of $d_\mathrm E^{\eta_1}$ is derived from \eqref{dE_min_PDF} as
\begin{align}\label{PDF_Z}
{f_{d_\mathrm E^{\eta_1}}}\left( x \right) &= {f_{{d_\mathrm E}}}\left( {{x^{  {1 \mathord{\left/
 {\vphantom {1 \eta }} \right.
 \kern-\nulldelimiterspace} \eta_1 }}}} \right){\left| {\frac{{\partial d_\mathrm E^{\eta_1}}}{{{d_\mathrm E}}}} \right|^{ - 1}} \notag\\
 &\mathop  = \limits^{\left( a \right)}  \sum_{f=1}^{K}{K \choose f}\frac{3f\left(-1\right)^{f+1}}{\eta_1 R_\mathrm{S}^{3f}}x^{\frac{3f}{\eta_1}-1},
\end{align}
where $(a)$ follows a binomial expansion.

\begin{figure}[!h]
\setlength{\abovecaptionskip}{0pt}
\setlength{\belowcaptionskip}{10pt}
\centering
\includegraphics[width=3.2 in]{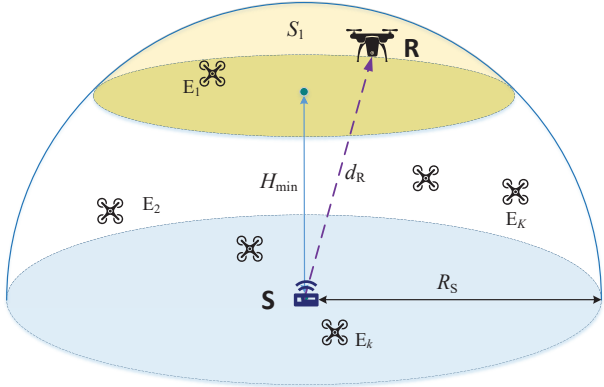}
\caption{The S-R link}
\label{SR}
\end{figure}

Further, to reflect the practical operation scenarios, we assume that R operates in the coverage space of S with minimum height $H_{\mathrm{min}}$ above the ground plane where $H_{\mathrm{min}} \leq R_\mathrm S$.
As shown in  Fig. \ref{SR},  R is assumed to be uniformly distributed
in the spherical cap $S_{1}$ with the radius $r_{\mathrm{cap}}=\sqrt{R_\mathrm{S}^{2}-H_{\mathrm{min}}^{2}}$
of the base of the cap and height $h_{\mathrm{cap}}=R_\mathrm{S}-H_{\mathrm{min}}$
. Then, it is easy to have the volume of the spherical cap $S_{1}$ as
\begin{align}
V_{\mathrm S_{1}}=\frac{\pi}{3}\left(2R_\mathrm{S}^{3}-3H_{\mathrm{min}}R_\mathrm{S}^{2}+H_{\mathrm{min}}^{3}\right).
\end{align}

In order to facilitate the following analysis, spherical coordinates
are adopted. Then, the coordinate
of R can be presented as $\left(r_\mathrm{R},\theta_\mathrm{R},\psi_\mathrm{R}\right)$,
where $H_{\mathrm{min}}\le r_\mathrm{R}\le R_\mathrm{S}$, $0\le\theta_\mathrm{R}\le\arccos\frac{H_{\mathrm{min}}}{R_\mathrm{S}}$
and $0\le\psi_\mathrm{R}\le2\pi$.

Therefore, employing Lemma 4 of \cite{UAV_pan1}, the CDF of the distance between S and R, $d_\mathrm{R}=r_\mathrm{R}$, can be derived as
\begin{align}
F_{d_\mathrm{R}}\left(x\right)&=\frac{1}{V_{S_{1}}}\intop_{0}^{\arccos\left(\frac{H_{\mathrm{min}}}{x}\right)}\intop_{\frac{H_{\mathrm{min}}}{\cos\theta}}^{x}\intop_{0}^{2\pi}\sigma^{2}\sin\left(\theta\right)d\psi d\sigma d\theta
 \notag\\
&
=\frac{2\pi}{V_{S_{1}}}\intop_{0}^{\arccos\left(\frac{H_{\mathrm{min}}}{x}\right)}\intop_{\frac{H_{\mathrm{min}}}{\cos\theta}}^{x}\sigma^{2}\sin
\left(\theta\right)d\sigma d\theta
 \notag\\
&
=\frac{2\pi}{3V_{S_{1}}}\intop_{0}^{\arccos\left(\frac{H_{\mathrm{min}}}{x}\right)}\left(x^{3}-H_{\mathrm{min}}^{3}\sec^{3}\left(\theta\right)\right)
\sin\left(\theta\right)d\theta
 \notag\\
&
=\frac{\pi}{3V_{S_{1}}}\left(2x^{3}-3H_{\mathrm{min}}x^{2}+H_{\mathrm{min}}^{3}\right).
\end{align}

So, the PDF of $d_\mathrm{R}$ and ${d_\mathrm{R}^{\eta_1}}$ can be obtained as
\begin{align}\label{cdf_dR}
f_{d_\mathrm{R}}\left(x\right)=\frac{2\pi}{V_{S_{1}}}\left(x^{2}-H_{\mathrm{min}}x\right)
\end{align}
and
\begin{align}
f_{d_\mathrm{R}^{\eta_1}}\left(x\right)=\frac{2\pi}{\eta_1 V_{  S_{1}}}\left(x^{\frac{3}{\eta_1}-1}-H_{\mathrm{min}}x^{\frac{2}{\eta_1}-1}\right),
\end{align}
respectively.

\begin{prop}
The PDF of $Z=\frac{d_\mathrm{R}^{\eta_1}}{d_\mathrm{E}^{\eta_1}}$  is expressed on the top of next page shown in \eqref{dZ},
\begin{figure*}
\begin{align}\label{dZ}
f_{Z}\left(z\right)=\begin{cases}
\sum_{f=1}^{K}A_{f}z^{-\frac{3f}{\eta_1}-1}, &\mathrm{if}~z>1;\\
\sum_{f=1}^{K}\left(B_{1,f}z^{\frac{3}{\eta_1}-1}+B_{2,f}z^{\frac{2}{\eta_1}-1}+B_{3,f}z^{-\frac{3f}{\eta_1}-1}\right), & \mathrm{if}~ \mathrm{\mathit{\frac{H_{\mathrm{min}}^{\eta_1}}{R_\mathrm{D}^{\eta_1}}\leq z\leq}1};\\
0, & \mathrm{else}
\end{cases}
\end{align}
\rule{18cm}{0.01cm}
\end{figure*}
where
\begin{align}
A_{f}=A_{1,f}+A_{2,f}
,\end{align}
\begin{align}\label{B1f}
B_{1,f}={K \choose f}\frac{6\pi}{\eta_1 V_{S_{1}}}\frac{\left(-1\right)^{f+1}f}{R_\mathrm{S}^{3f}}\frac{R_\mathrm{S}^{3f+3}}{3f+3}
,\end{align}

\begin{align}\label{B2f}
B_{2,f}=-{K \choose f}\frac{6\pi}{\eta_1 V_{S_{1}}}\frac{\left(-1\right)^{f+1}f}{R_\mathrm{S}^{3f}}\frac{H_{\mathrm{min}}R_\mathrm{S}^{3f+2}}{3f+2}
,\end{align}
and
\begin{align}\label{B3f}
B_{3,f}=&{K \choose f}\frac{6\pi}{\eta_1 V_{S_{1}}}\frac{\left(-1\right)^{f+1}f}{R_\mathrm{S}^{3f}}H_{\mathrm{min}}^{3f+3}
\notag\\
&\times
\left(\frac{f}{3f+3}-\frac{3f}{2\left(3f+2\right)}+\frac{1}{6}\right)
,\end{align}
in which $A_{1,f}$ and $A_{2,f}$ can be written as
\begin{align}\label{A1f}
A_{1,f}={K \choose f}\frac{\pi}{\eta_1 V_{S_{1}}}\left(-1\right)^{f+1}f\left(2R_\mathrm{S}^{3}-3H_{\mathrm{min}}R_\mathrm{S}^{2}+H_{\mathrm{min}}^{3}\right)
\end{align}
and
\begin{align}\label{A2f}
A_{2,f}=&{K \choose f}\frac{\pi}{\eta_1 V_{ S_{1}}}\left(-1\right)^{f}\frac{f^{2}}{R_\mathrm{S}^{3f}}\left[\frac{2}{f+1}R_\mathrm{S}^{3f+3}\right.
\notag\\&
-\frac{9}{3f+2}R_{S}^{3f+2}H_{\mathrm{min}}+\frac{1}{f}R_\mathrm{S}^{3f}H_{\mathrm{min}}^{3}
\notag\\&
\left.
+\left(\frac{9}{3f+2}-\frac{2}{f+1}-\frac{1}{f}\right)H_{\mathrm{min}}^{3f+3}\right ]
,\end{align}
respectively.

\begin{proof}
See Appendix A.
\end{proof}
\end{prop}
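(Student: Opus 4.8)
The plan is to use the independence of the two geometric random variables $d_\mathrm{R}$ and $d_\mathrm{E}$ (R and the Eves are dropped independently, and independence is preserved under the monotone map $d\mapsto d^{\eta_1}$) and to obtain the law of the quotient $Z=d_\mathrm{R}^{\eta_1}/d_\mathrm{E}^{\eta_1}$ by conditioning on $d_\mathrm{E}^{\eta_1}$. Concretely, $F_Z(z)=\Pr\{d_\mathrm{R}^{\eta_1}\le z\,d_\mathrm{E}^{\eta_1}\}=\int_0^{R_\mathrm{S}^{\eta_1}}F_{d_\mathrm{R}^{\eta_1}}(zy)\,f_{d_\mathrm{E}^{\eta_1}}(y)\,\mathrm dy$, after which $f_Z(z)=\tfrac{\mathrm d}{\mathrm dz}F_Z(z)$. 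Both ingredients are already at hand: $f_{d_\mathrm{E}^{\eta_1}}$ is the binomial sum in \eqref{PDF_Z}, and $F_{d_\mathrm{R}^{\eta_1}}$ follows from the CDF $F_{d_\mathrm{R}}$ computed above by the substitution $x\mapsto x^{1/\eta_1}$, i.e. $F_{d_\mathrm{R}^{\eta_1}}(x)=\tfrac{\pi}{3V_{S_1}}\bigl(2x^{3/\eta_1}-3H_{\mathrm{min}}x^{2/\eta_1}+H_{\mathrm{min}}^3\bigr)$ on $[H_{\mathrm{min}}^{\eta_1},R_\mathrm{S}^{\eta_1}]$, with value $0$ below $H_{\mathrm{min}}^{\eta_1}$ and $1$ above $R_\mathrm{S}^{\eta_1}$.

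Since $F_{d_\mathrm{R}^{\eta_1}}$ is piecewise, the integrand changes form according to where $zy$ falls inside the support $[H_{\mathrm{min}}^{\eta_1},R_\mathrm{S}^{\eta_1}]$, which forces a case split on $z$. For $z>1$ the relevant $y$-range is $[H_{\mathrm{min}}^{\eta_1}/z,\,R_\mathrm{S}^{\eta_1}]$ and it breaks at $y=R_\mathrm{S}^{\eta_1}/z$: on $[H_{\mathrm{min}}^{\eta_1}/z,\,R_\mathrm{S}^{\eta_1}/z]$ one inserts the middle branch of $F_{d_\mathrm{R}^{\eta_1}}$, whereas on $[R_\mathrm{S}^{\eta_1}/z,\,R_\mathrm{S}^{\eta_1}]$ one has $F_{d_\mathrm{R}^{\eta_1}}(zy)\equiv1$ and that piece contributes $1-F_{d_\mathrm{E}^{\eta_1}}(R_\mathrm{S}^{\eta_1}/z)$; differentiating these two contributions in $z$ produces the two families $A_{2,f}$ and $A_{1,f}$, whose sum is $A_f$. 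For $H_{\mathrm{min}}^{\eta_1}/R_\mathrm{S}^{\eta_1}\le z\le1$ the relevant $y$-range is $[H_{\mathrm{min}}^{\eta_1}/z,\,R_\mathrm{S}^{\eta_1}]$ and $F_{d_\mathrm{R}^{\eta_1}}(zy)$ never reaches $1$, so only the middle branch enters; here the constant upper limit $R_\mathrm{S}^{\eta_1}$ generates the $B_{1,f}z^{3/\eta_1-1}$ and $B_{2,f}z^{2/\eta_1-1}$ terms while the $z$-dependent lower limit $H_{\mathrm{min}}^{\eta_1}/z$ generates the $B_{3,f}z^{-3f/\eta_1-1}$ term. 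For $z<H_{\mathrm{min}}^{\eta_1}/R_\mathrm{S}^{\eta_1}$ the event $\{d_\mathrm{R}^{\eta_1}\le z\,d_\mathrm{E}^{\eta_1}\}$ is empty, so $f_Z\equiv0$. In each case the inner integrals are elementary: once $f_{d_\mathrm{E}^{\eta_1}}$ is expanded into its $K$-term binomial sum and $F_{d_\mathrm{R}^{\eta_1}}(zy)$ into three power monomials, every summand reduces to $\int y^{c/\eta_1-1}\,\mathrm dy$ over the stated limits.

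What remains is purely the bookkeeping of collecting the coefficients and matching them to \eqref{B1f} through \eqref{A2f}. One identity worth using throughout is $2R_\mathrm{S}^3-3H_{\mathrm{min}}R_\mathrm{S}^2+H_{\mathrm{min}}^3=3V_{S_1}/\pi$, which both explains the prefactor of $A_{1,f}$ in \eqref{A1f} and makes the two $z>1$ pieces match up at the shared endpoint $y=R_\mathrm{S}^{\eta_1}/z$. I expect the main obstacle to be exactly this case-by-case bookkeeping, namely locating the moving breakpoints $H_{\mathrm{min}}^{\eta_1}/z$ and $R_\mathrm{S}^{\eta_1}/z$ of the piecewise CDF and then tracking the many rational prefactors of the type $\tfrac1{3f+3}$, $\tfrac1{3f+2}$, $\tfrac1{f+1}$ through the differentiation in $z$, rather than any conceptual difficulty; a convenient consistency check at the end is that $A_f=B_{1,f}+B_{2,f}+B_{3,f}$, i.e.\ that $f_Z$ is continuous at $z=1$.
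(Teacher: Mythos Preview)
Your proposal is correct and follows essentially the same route as the paper's Appendix~A: both compute $F_Z(z)$ by conditioning on $d_\mathrm{E}^{\eta_1}$ and then differentiate, with the same case split at $z=1$ and the same moving breakpoints $H_{\mathrm{min}}^{\eta_1}/z$ and $R_\mathrm{S}^{\eta_1}/z$. The only cosmetic difference is that the paper writes the inner integral over the \emph{PDF} $f_{d_\mathrm{R}^{\eta_1}}$ via the substitution $u=d_\mathrm{R}^{\eta_1}/y$ (so that $F_Z$ appears as a double integral), whereas you plug in the already-integrated \emph{CDF} $F_{d_\mathrm{R}^{\eta_1}}(zy)$ directly; after the inner $u$-integration the paper's expression coincides with yours term by term, and the same polynomial bookkeeping and differentiation produce $A_{1,f},A_{2,f},B_{1,f},B_{2,f},B_{3,f}$.
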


\subsection{SOP Analysis for S-R Link}
The secrecy capacity for S-R link in condition that R can successfully decode the signal $x_\mathrm s$ is defined as \cite{Bloch}
\begin{align}\label{secrecy_capacity_def}
{C_\mathrm S} (\gamma_\mathrm R, \gamma_\mathrm E) = \max \left\{ {{{\log }_2}\left( {1 + {\gamma _\mathrm R}} \right) - {{\log }_2}\left( {1 + {\gamma _\mathrm E}} \right),0} \right\}.
\end{align}

In this work, passive eavesdropping is assumed to reflect the
most common eavesdropping scenario for the eavesdropper to
achieve the best eavesdropping and to keep itself from being
uncovered \cite{secure_Tingting_LI}. In other words, S has no CSI of the eavesdropping
channel and SOP is investigated.
Thus, the SOP in the first phase is thus defined as the probability that the secrecy capacity is below a certain threshold ($C_\mathrm{th}$) with R  decoding successfully \cite{secure_Tingting_LI}, which is given by
\begin{align}\label{sop_1}
{\rm{SOP}_1} &= \Pr \left\{ {{{\log }_2}\left( {1 + {\gamma _\mathrm R}} \right) - {{\log }_2}\left( {1 + {\gamma _\mathrm E}} \right) \le {C_\mathrm{th}}} \right\} \notag\\
& = {\Pr \left\{ {{\gamma _\mathrm R} \le \lambda {\gamma _\mathrm E} + \lambda  - 1} \right\}}
\notag\\
& \geq {\Pr \left\{ {{\gamma _\mathrm R} \le \lambda {\gamma _\mathrm E} } \right\}}
%\notag\\
%&=\Pr\left\{ \frac{P_{D}\left\Vert \mathbf{h}_{DR}\right\Vert ^{2}}{N_{R}d_{R}^{\eta_1}}\leq\lambda\frac{P_{D}\left\Vert \mathbf{h}_{DE^{*}}\right\Vert ^{2}}{N_{E}d_{E}^{\eta-1}}\right\} =\mathrm{SOP}_{1}^{\mathrm{L}}
,
\end{align}
where $\lambda=2^{C_\mathrm{th}}$. Here we assume that $\gamma_{\mathrm{hold}}<\lambda {\gamma _\mathrm E}$ which means that a secure transmission would includes the case of successfully decoding at R.

\begin{figure*}
\begin{align}\label{sop1_final}
\mathrm{SOP}_{1}^{\mathrm{L}}=&1-\frac{\lambda_\mathrm{E}^{m_\mathrm{E}}}{\Gamma\left(m_\mathrm{E}\right)}\sum_{k=0}^{Lm_\mathrm{R}-1}\sum_{f=1}^{K}
\frac{a_{0}^{k}}{k!}\left[
B_{1,f}H_{1}\left(\rho,\lambda_\mathrm{E},a_{0},k+m_\mathrm{E}-1,k+\frac{3}{\eta_1}-1\right)\right.
\notag\\&+B_{2,f}H_{1}\left(\rho,\lambda_\mathrm{E},a_{0},k+m_\mathrm{E}-1,k+\frac{2}{\eta_1}-1\right)+B_{3,f}H_{1}\left(\rho,
\lambda_\mathrm{E},a_{0},k+m_\mathrm{E}-1,k-\frac{3f}{\eta_1}-1\right)
\notag\\&\left.
+A_{f}H_{2}\left(\lambda_\mathrm{E},a_{0},k+m_\mathrm{E}-1,k-\frac{3f}{\eta_1}-1\right)
\right]
\end{align}
\rule{18cm}{0.01cm}
\end{figure*}
\begin{theorem}\label{SOP1_theo}
The lower bound of  $\mathrm{SOP}_{1}$ for S-R link  of the considered  RF-FSO cooperative SATN can be derived as \eqref{sop1_final} shown on the top of next page, where
$a_{0}=\lambda\lambda_\mathrm{R}\frac{N_\mathrm{R}}{N_\mathrm{E}}$, the functions $H_{1}\left(\varrho,a,b,q,p\right)$  and $H_{2}\left(a,b,q,p\right)$ are expressed as
\begin{align}\label{H_fun_abpq}
&~~~~H_{1}\left(\varrho,a,b,q,p\right)
\notag\\&
=\frac{\varrho^{p+1}\Gamma\left(q+1\right)}{\left(k-p\right)\left(\varrho b+a\right)^{q+1}}{}_2F_1\left(1,q+1;q-p+1;\frac{a}{\varrho b+a}\right)
\notag\\&
~~~~-\frac{\Gamma\left(q+1\right)}{\left(k-p\right)\left(b+a\right)^{q+1}}{}_2F_1\left(1,q+1;q-p+1;\frac{a}{b+a}\right)
\end{align}
and

\begin{align}
H_{2}\left(a,b,q,p\right)=&\frac{\Gamma\left(q+1\right)}{\left(k-p\right)\left(b+a\right)^{q+1}}
\notag\\
&\times{}_2F_1\left(1,q+1;q-p+1;\frac{a}{b+a}\right)
,\end{align}
respectively, in which  ${}_2F_1(\cdot,\cdot;\cdot;\cdot)$ denotes the hypergeometric function \cite[Eq. (9.100)]{Gradshteyn}.

\begin{proof}
See Appendix B.
\end{proof}

\end{theorem}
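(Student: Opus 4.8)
The inequality $\mathrm{SOP}_1\ge\Pr\{\gamma_\mathrm R\le\lambda\gamma_\mathrm E\}$ is already in place from \eqref{sop_1}, so the task is to evaluate $\mathrm{SOP}_1^{\mathrm L}=\Pr\{\gamma_\mathrm R\le\lambda\gamma_\mathrm E\}$ in closed form, which I would do by conditioning on the two random distances and integrating them out in turn. Substituting \eqref{gamma_r} and \eqref{gamma_e}, the transmit power $P_\mathrm S$ cancels and the event becomes $\|\mathbf h_\mathrm{SR}\|^2\le\lambda\tfrac{N_\mathrm R}{N_\mathrm E}Z g_\mathrm E$, where $Z=d_\mathrm R^{\eta_1}/d_\mathrm E^{\eta_1}$ has the PDF of Proposition~1 and $g_\mathrm E=\|h_{\mathrm{SE}^{*}}\|^2$ is Gamma distributed by \eqref{Nakagami_PDF} with parameters $m_\mathrm E,\lambda_\mathrm E$. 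Since $\|\mathbf h_\mathrm{SR}\|^2$, $g_\mathrm E$, and $Z$ are independent, I would condition on $Z=z$, $g_\mathrm E=y$ and apply the MRC--Nakagami CDF \eqref{MRC_Nakagami_CDF}; with $a_0=\lambda\lambda_\mathrm R N_\mathrm R/N_\mathrm E$ this gives $\Pr\{\|\mathbf h_\mathrm{SR}\|^2\le\lambda\tfrac{N_\mathrm R}{N_\mathrm E}zy\}=1-e^{-a_0zy}\sum_{k=0}^{Lm_\mathrm R-1}\tfrac{(a_0zy)^k}{k!}$.

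Averaging over $g_\mathrm E$, the leading $1$ contributes $1$ and each term of the finite sum reduces to $\int_0^\infty y^{k+m_\mathrm E-1}e^{-(a_0z+\lambda_\mathrm E)y}\,dy=\Gamma(k+m_\mathrm E)(a_0z+\lambda_\mathrm E)^{-(k+m_\mathrm E)}$, a standard Gamma integral \cite[Eq.~(3.381.4)]{Gradshteyn}. This leaves $\mathrm{SOP}_1^{\mathrm L}=1-\tfrac{\lambda_\mathrm E^{m_\mathrm E}}{\Gamma(m_\mathrm E)}\sum_{k=0}^{Lm_\mathrm R-1}\tfrac{a_0^k\Gamma(k+m_\mathrm E)}{k!}\,\mathbb{E}_Z\!\left[Z^k(a_0Z+\lambda_\mathrm E)^{-(k+m_\mathrm E)}\right]$, with the factor $\Gamma(k+m_\mathrm E)=\Gamma(q+1)$ to be absorbed into the functions $H_1,H_2$ on setting $q=k+m_\mathrm E-1$.

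It then remains to compute the $Z$-average with the two-branch PDF of Proposition~1. The branch $z>1$, where $f_Z(z)=\sum_fA_fz^{-3f/\eta_1-1}$, produces upper-tail integrals $\int_1^\infty z^{p}(a_0z+\lambda_\mathrm E)^{-(q+1)}\,dz$ with $p=k-3f/\eta_1-1$; the branch $\rho\le z\le1$, with $\rho=H_{\mathrm{min}}^{\eta_1}/R_\mathrm S^{\eta_1}$ (the quantity appearing as the first argument of $H_1$ in the statement) and $f_Z(z)=\sum_f(B_{1,f}z^{3/\eta_1-1}+B_{2,f}z^{2/\eta_1-1}+B_{3,f}z^{-3f/\eta_1-1})$, produces finite-interval integrals $\int_\rho^1 z^{p}(a_0z+\lambda_\mathrm E)^{-(q+1)}\,dz$ with $p=k+3/\eta_1-1$, $k+2/\eta_1-1$, and $k-3f/\eta_1-1$ respectively. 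The substitution $t=\lambda_\mathrm E/(a_0z+\lambda_\mathrm E)$ turns each of these into an incomplete Beta integral $\int(1-t)^{p}t^{\,q-1-p}\,dt$; applying $B(x;\alpha,\beta)=\tfrac{x^\alpha}{\alpha}{}_2F_1(\alpha,1-\beta;\alpha+1;x)$ and Euler's transformation ${}_2F_1(q-p,-p;q-p+1;x)=(1-x)^{p+1}{}_2F_1(1,q+1;q-p+1;x)$ recasts the result as hypergeometric functions of argument $\lambda_\mathrm E/(a_0z+\lambda_\mathrm E)$ evaluated at the endpoints $z=\rho$ and $z=1$ --- exactly the ${}_2F_1(1,q+1;q-p+1;\tfrac{a}{a+b})$ appearing in $H_1$ (two-term, boundaries $\rho$ and $1$) and $H_2$ (one-term, boundary $1$ only, since the $z\to\infty$ boundary vanishes as $q-p=m_\mathrm E+3f/\eta_1>0$). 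Reinserting these closed forms, multiplying by the corresponding $A_f,B_{1,f},B_{2,f},B_{3,f}$, and collecting the double sum over $k$ and $f$ produces \eqref{sop1_final}.

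The main obstacle is this final identification: the bare integral-table outputs (equivalently \cite[Eqs.~(3.194.1),~(3.194.2)]{Gradshteyn}) deliver hypergeometric functions whose parameter triples and arguments do not literally coincide with the target $H_1,H_2$, so one must pick the correct Euler/Pfaff transformation of ${}_2F_1$, manage the sign bookkeeping when handling the interval $[\rho,1]$, and verify the convergence condition $p<q$ that makes the $z>1$ integral finite. The remaining steps --- cancelling $P_\mathrm S$, the Gamma integral over $g_\mathrm E$, and tracking the finite sums over $k$ and $f$ --- are routine.
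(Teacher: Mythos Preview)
Your proposal is correct, but the route differs from the paper's in the order of integration and the special-function machinery used.  After the common reduction to the double integral of $e^{-\lambda_\mathrm E x}x^{k+m_\mathrm E-1}e^{-a_0xz}z^{p}$ over $x\in(0,\infty)$ and $z$ in the appropriate branch, the paper integrates in $z$ first: for $p\ge0$ the inner integral is the upper incomplete Gamma function $\Gamma(p+1,\cdot)$ via \cite[Eq.~(3.381.3)]{Gradshteyn}, and the remaining $x$-integral is read off directly from \cite[Eq.~(6.455.1)]{Gradshteyn}; for $p<0$ the inner integral is rewritten through the generalized exponential integral $E_{-p}(\cdot)$ via \cite[Eq.~(8.19.2)]{DLMF}, and \cite[Eq.~(8.19.25)]{DLMF} supplies the outer integral.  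Both branches land on the same ${}_2F_1$ expression.  You instead integrate in $x$ first, obtaining the complete Gamma factor $\Gamma(q+1)(a_0z+\lambda_\mathrm E)^{-(q+1)}$, and then reduce the surviving $z$-integral to an incomplete Beta integral by the substitution $t=\lambda_\mathrm E/(a_0z+\lambda_\mathrm E)$, followed by the Euler transformation to match the target parameters.  Your path is more self-contained and avoids the $p\gtrless0$ case split, at the price of having to identify the right Pfaff/Euler transformation; the paper's path is shorter because each step is a single table lookup, but it requires two separate references to cover the sign of $p$.  Either derivation yields \eqref{sop1_final}.
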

\begin{remark}\label{remark_L_SOP}
A large $L$ results in a small SOP over S-R link, since more antennas  bring a large
diversity gain at R.
\end{remark}

\section{Outage Analysis for R-D link}
%subsection{OUTAGE ANALYSIS FOR R-S LINK}
As FSO is employed over R-D link, we assume that the information transmission from R to D will not be overheard by the eavesdroppers due to the highly directive and narrow nature of laser beam. Thus, in this section, we will investigate the impact of the randomness of the satellite's position on the outage performance of the information transmissions over R-D link.
\begin{figure}[!h]
\setlength{\abovecaptionskip}{0pt}
\setlength{\belowcaptionskip}{10pt}
\centering
\includegraphics[width=2.4 in]{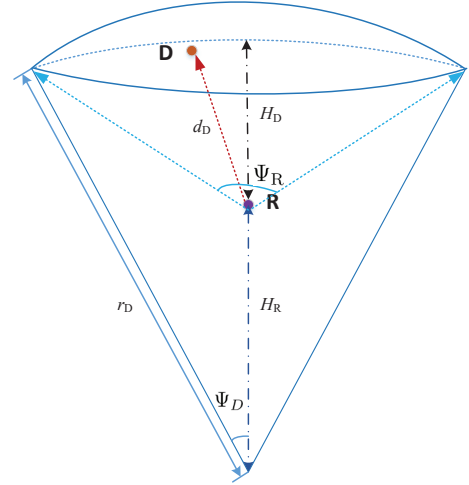}
\caption{The R-D link}
\label{RS}
\end{figure}

As shown in Fig. \ref{RS}, D is assumed to be uniformly distributed on the
spherical doom with the radius $r_\mathrm{D}=R_{\mathrm{earth}}+H_\mathrm{D}$,
where $R_{\mathrm{earth}}$ and $H_\mathrm{D}$ are the radius of Earth and
the orbit height of D, respectively.
% and $R_{\mathrm{earth}}=6371$ km.
The tracking
mechanisms in \cite{6809945,9672083} enable that R can localize D in real-time, which  can provide high directionality with FSO transmission for R-D link.
In order to facilitate the following analysis, spherical coordinates
are adopted, where the center of the earth is set as the original.
The relationship of $\Psi_{D}$ and the beamwidth of R  $\Psi_{\mathrm{R}}$ with low earth orbit S is
\begin{align}
\Psi_{\mathrm{R}}\approx2\tan^{-1}\left(\frac{R_{\mathrm{earth}}+H_{\mathrm{D}}}{H_{\mathrm{D}}}\sin\mathrm{\Psi_{\mathrm{D}}}\right).
\end{align}
Then, the coordinate
of D can be presented as $\left(r_\mathrm{D},\theta_\mathrm{D},\psi_\mathrm{D}\right)$,
where $0\le\theta_\mathrm{D}\le\mathrm{\Psi_\mathrm{D}}$  $\mathrm{\Psi_\mathrm{R}}$ and $0\le\psi_\mathrm{D}\le2\pi$.
The coordinate of R can be written as $(H_\mathrm{R},0,0)$, where $H_\mathrm{R}=R_{\mathrm{earth}}+h_\mathrm{R}$,
in which $h_{R}$ is the height of R. So the range of $H_R$ is
 $R_{\mathrm{earth}}+H_{\mathrm{min}} \text{\ensuremath{\le}}H_\mathrm{R} \text{\ensuremath{\le}}R_{\mathrm{earth}}+R_\mathrm{S}$.

Therefore, the distance between R and D, $d_\mathrm{D}$, can be presented
as
\begin{align}
d_\mathrm{D}=\sqrt{r_\mathrm{D}^{2}+H_\mathrm{R}^{2}-2r_\mathrm{D}H_\mathrm{R}\cos\theta_\mathrm{D}}.
\end{align}
Since $r_\mathrm{D}-H_\mathrm{R}=d_{\mathrm D,\mathrm{min}}\text{\ensuremath{\le}}d_\mathrm{D}\text{\ensuremath{\le}}d_{\mathrm D,\mathrm{max}}=\sqrt{r_\mathrm{D}^{2}+H_\mathrm{R}^{2}-2r_\mathrm{D}H_\mathrm{R}\cos\Psi_\mathrm{D}}$, we consider the case that the path-loss factor is 2 to simplify the analysis in the following.
\begin{prop}\label{dSR}
When D is uniformly distributed in the space shown in Fig. \ref{RS} and the height of R is known,  the PDF of $d_{D}^2$ can derived as
\begin{align}\label{pdf_dSR2}
f_{d_\mathrm{D}^{2}}\left(w\right)=\frac{1}{2r_\mathrm{D}H_{R}\left(1-\cos\Psi_\mathrm{D}\right)}
,
\end{align}
and the range of $d_{D}^{2}$ is $\left(r_\mathrm{D}-H_\mathrm{R}\right)^{2}=d_{\mathrm D,\mathrm{min}}^{2}\text{\ensuremath{\le}}d_\mathrm{D}^{2}\text{\ensuremath{\le}}d_{\mathrm D,\mathrm{max}}^{2}=r_\mathrm{D}^{2}+H_\mathrm{R}^{2}-2r_\mathrm{D}H_\mathrm{R}\cos\Psi_\mathrm{D}$.

\begin{proof}
See Appendix C.
\end{proof}

\end{prop}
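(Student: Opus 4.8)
\textbf{Proof proposal for Proposition~\ref{dSR}.}

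The plan is to regard $\mathrm{D}$ as a point drawn uniformly (with respect to surface area) on the spherical cap of radius $r_\mathrm{D}$ carved out by the polar angle $\theta_\mathrm{D}\in[0,\Psi_\mathrm{D}]$, and then to push this law forward through the map $\theta_\mathrm{D}\mapsto d_\mathrm{D}^{2}$ by the Jacobian method already used for \eqref{PDF_Z}. First I would write the area element on the sphere as $r_\mathrm{D}^{2}\sin\theta\,\dd\theta\,\dd\psi$, so that the total area of the cap is $\int_{0}^{2\pi}\!\int_{0}^{\Psi_\mathrm{D}} r_\mathrm{D}^{2}\sin\theta\,\dd\theta\,\dd\psi = 2\pi r_\mathrm{D}^{2}\left(1-\cos\Psi_\mathrm{D}\right)$. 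Dividing the area element by this normaliser and integrating out $\psi_\mathrm{D}\in[0,2\pi]$, on which $d_\mathrm{D}$ does not depend, yields the marginal density of the polar angle, namely $f_{\theta_\mathrm{D}}(\theta)=\frac{\sin\theta}{1-\cos\Psi_\mathrm{D}}$ for $\theta\in[0,\Psi_\mathrm{D}]$ (whose integral over the support is $1$, as a consistency check).

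Next I would use the law-of-cosines expression $d_\mathrm{D}^{2}=r_\mathrm{D}^{2}+H_\mathrm{R}^{2}-2r_\mathrm{D}H_\mathrm{R}\cos\theta_\mathrm{D}$ to change variables. Writing $w=d_\mathrm{D}^{2}$ gives $\cos\theta_\mathrm{D}=\frac{r_\mathrm{D}^{2}+H_\mathrm{R}^{2}-w}{2r_\mathrm{D}H_\mathrm{R}}$ and $\frac{\dd w}{\dd\theta_\mathrm{D}}=2r_\mathrm{D}H_\mathrm{R}\sin\theta_\mathrm{D}$, which is strictly positive for $\theta_\mathrm{D}\in(0,\Psi_\mathrm{D})$, so the map is monotone and the Jacobian formula applies cleanly. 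Then
\begin{align*}
f_{d_\mathrm{D}^{2}}(w)=f_{\theta_\mathrm{D}}(\theta_\mathrm{D})\left|\frac{\dd\theta_\mathrm{D}}{\dd w}\right|=\frac{\sin\theta_\mathrm{D}}{1-\cos\Psi_\mathrm{D}}\cdot\frac{1}{2r_\mathrm{D}H_\mathrm{R}\sin\theta_\mathrm{D}}=\frac{1}{2r_\mathrm{D}H_\mathrm{R}\left(1-\cos\Psi_\mathrm{D}\right)},
\end{align*}
which is exactly \eqref{pdf_dSR2}: the $\sin\theta_\mathrm{D}$ factors cancel, leaving a density that is constant in $w$. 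The support then follows by evaluating $w$ at the endpoints: $\theta_\mathrm{D}=0$ gives $d_{\mathrm D,\mathrm{min}}^{2}=\left(r_\mathrm{D}-H_\mathrm{R}\right)^{2}$ and $\theta_\mathrm{D}=\Psi_\mathrm{D}$ gives $d_{\mathrm D,\mathrm{max}}^{2}=r_\mathrm{D}^{2}+H_\mathrm{R}^{2}-2r_\mathrm{D}H_\mathrm{R}\cos\Psi_\mathrm{D}$.

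The calculation itself is routine; the only step that deserves care is the modelling convention, i.e.\ interpreting ``uniformly distributed on the spherical doom'' as uniform with respect to surface area, since this is precisely what produces the $\sin\theta$ weighting of $\theta_\mathrm{D}$ and hence the exact cancellation. As a sanity check I would also note that the same conclusion can be obtained without forming $f_{\theta_\mathrm{D}}$ explicitly: the event $\{d_\mathrm{D}^{2}\le w\}$ corresponds to $\{\cos\theta_\mathrm{D}\ge (r_\mathrm{D}^{2}+H_\mathrm{R}^{2}-w)/(2r_\mathrm{D}H_\mathrm{R})\}$, whose cap area is an affine function of $w$, so differentiating the CDF directly returns a constant PDF with the same normalisation. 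I expect no real obstacle beyond keeping the spherical-measure bookkeeping straight.
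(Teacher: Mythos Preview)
Your proposal is correct and follows essentially the same route as the paper's Appendix~C: compute the cap area $2\pi r_\mathrm{D}^{2}(1-\cos\Psi_\mathrm{D})$, obtain $f_{\theta_\mathrm{D}}(\theta)=\sin\theta/(1-\cos\Psi_\mathrm{D})$, and then push forward through $w=r_\mathrm{D}^{2}+H_\mathrm{R}^{2}-2r_\mathrm{D}H_\mathrm{R}\cos\theta_\mathrm{D}$ via the Jacobian, with the $\sin\theta_\mathrm{D}$ factors cancelling. Your presentation is in fact slightly tidier than the paper's (you differentiate $w$ with respect to $\theta_\mathrm{D}$ directly rather than writing the inverse map through $\arccos$), and the CDF sanity check you add is a nice touch not present in the original.
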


The OP of the link between R and D is defined as
\begin{align}\label{op2_define}
{\rm{OP}_2} = \Pr \left\{  \gamma_\mathrm{D} < \gamma_{\mathrm{out}}\right\},
\end{align}
where $\gamma_{\mathrm{out}}$ is the threshold that enables D to effectively receive the signals from R.

\begin{theorem}\label{OP2}
When D is uniformly distributed in the space shown in Fig. \ref{RS} and the height of R is known,  the OP for R-D link of the considered  RF-FSO cooperative SATN  is obtained as
\begin{align}\label{op2_final}
\widetilde{\mathrm{OP_{2}}}=&\frac{I}{2 \epsilon r_\mathrm{D}H_\mathrm{R}\left(1-\cos\Psi_\mathrm{D}\right)}
\notag\\&\times
\left\{G_{r+2,3r+2}^{3r,2}\left[\epsilon d_{\mathrm D,\mathrm{max}}^{2}\Big| \begin{array}{c}
1,2,K_{1}+1\\
K_{2}+1,0,1
\end{array}\right]\right.
\notag\\&~~~~
\left.-G_{r+2,3r+2}^{3r,2}\left[\epsilon d_{\mathrm D,\mathrm{min}}^{2}\Big|  \begin{array}{c}
1,2,K_{1}+1\\
K_{2}+1,0,1
\end{array}\right]\right\}
,\end{align}
where
\begin{align}
\epsilon=\frac{\left({4\pi f_\mathrm c}\sigma_{d}\right)^{2}\left(hab\right)^{r}}{P_\mathrm{R}{c}^2\zeta^{2}\mathcal{L}_\mathrm{r}^{2}r^{2r}}\gamma_{\mathrm{out}}
.\end{align}
\begin{proof}
See Appendix D.
\end{proof}

\end{theorem}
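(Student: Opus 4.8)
The plan is to condition on the R--D distance, average over it with the density of Proposition~\ref{dSR}, and then collapse the resulting finite integral of a Meijer $G$-function into a single $G$-function.

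First I would observe that $\gamma_\mathrm{D}=\Omega_\mathrm{D}I_\mathrm{fso}^{2}$ depends on the random position of D only through $\Omega_\mathrm{D}$ in \eqref{rs_ave}, and $\Omega_\mathrm{D}$ in turn depends on $d_\mathrm{D}$ only via the free-space path loss $\mathcal{L}_\mathrm{FS}=\left(4\pi f_\mathrm c d_\mathrm{D}/c\right)^{2}$, i.e.\ via $d_\mathrm{D}^{2}$. Hence, conditioned on $d_\mathrm{D}^{2}=w$, \eqref{op2_define} equals $F_{\gamma_\mathrm{D}}\!\left(\gamma_\mathrm{out}\right)$ with $F_{\gamma_\mathrm{D}}$ given by \eqref{CDF_FSO}; substituting $\Omega_\mathrm{D}$ explicitly into $\rho$ shows that $\rho\gamma_\mathrm{out}=\epsilon w$ with $\epsilon$ exactly as in the statement, so the conditional OP is $I\,G_{r+1,3r+1}^{3r,1}$ evaluated at $\epsilon w$ with upper parameters $1,K_{1}$ and lower parameters $K_{2},0$ as in \eqref{CDF_FSO}.

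Next I would write $\widetilde{\mathrm{OP_{2}}}=\int_{d_{\mathrm D,\min}^{2}}^{d_{\mathrm D,\max}^{2}}\left(\mathrm{OP}_{2}\mid w\right)f_{d_\mathrm{D}^{2}}(w)\,\dd w$, pull the constant density of Proposition~\ref{dSR} out of the integral, and split the remaining integral as $\int_{0}^{d_{\mathrm D,\max}^{2}}-\int_{0}^{d_{\mathrm D,\min}^{2}}$. Each piece follows from the standard finite-range formula $\int_{0}^{u}x^{\alpha-1}G_{p,q}^{m,n}\!\left[\beta x\,\big|\,{\begin{array}{c}(a_{p})\\ (b_{q})\end{array}}\right]\dd x=u^{\alpha}G_{p+1,q+1}^{m,n+1}\!\left[\beta u\,\big|\,{\begin{array}{c}1-\alpha,(a_{p})\\ (b_{q}),-\alpha\end{array}}\right]$ taken at $\alpha=1$, which appends the parameter pair $(0,-1)$ and raises the orders to $(r+2,3r+2)$. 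Finally I would use the scaling rule $z^{\sigma}G_{p,q}^{m,n}[z\,|\,\cdot]=G_{p,q}^{m,n}[z\,|\,\cdot+\sigma]$ with $\sigma=1$ to absorb the prefactor $u=\epsilon^{-1}(\epsilon u)$: every upper and lower parameter shifts up by one, turning $0,1,K_{1}$ into $1,2,K_{1}+1$ and $K_{2},0,-1$ into $K_{2}+1,1,0$, i.e.\ $K_{2}+1,0,1$ after a harmless reordering; collecting the leftover $\epsilon^{-1}$ with the density constant gives \eqref{op2_final}.

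The step I expect to be the main obstacle is the parameter bookkeeping at the end: one must correctly track how the pair $(0,-1)$ produced by the integration formula interacts with the unit shift from the scaling rule, and justify the reordering $K_{2}+1,1,0\to K_{2}+1,0,1$, which is legitimate since a Meijer $G$-function is invariant under permutations within its first $m$ lower indices, within the remaining $q-m$ lower indices, and likewise for the upper indices. A secondary but essential point is that the path-loss exponent is fixed to $2$ so that $\mathcal{L}_\mathrm{FS}\propto d_\mathrm{D}^{2}$ matches exactly the variable for which Proposition~\ref{dSR} supplies a closed-form PDF --- which is why that proposition is stated for $d_\mathrm{D}^{2}$ rather than $d_\mathrm{D}$.
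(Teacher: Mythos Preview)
Your proposal is correct and follows essentially the same route as the paper: condition on $d_\mathrm{D}^{2}$, average against the uniform density of Proposition~\ref{dSR}, and reduce to a finite-range integral of a Meijer $G$-function. The only cosmetic difference is that the paper first changes variables to $g=\epsilon d_\mathrm{D}^{2}$ before integrating, whereas you integrate in $w=d_\mathrm{D}^{2}$ and absorb the $\epsilon$ afterwards via the scaling rule; your explicit split $\int_{0}^{d_{\mathrm D,\max}^{2}}-\int_{0}^{d_{\mathrm D,\min}^{2}}$ together with the formula $\int_{0}^{u}x^{\alpha-1}G_{p,q}^{m,n}[\beta x]\,\dd x=u^{\alpha}G_{p+1,q+1}^{m,n+1}[\beta u]$ and the parameter shift actually supplies the detail that the paper leaves to a bare citation of \cite[Eq.~(9.301)]{Gradshteyn}.
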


%\begin{coro}\label{coro1}
%
%\end{coro}

Based on \emph{Theorem \ref{OP2}} and the randomness of the positions of R  in Fig. \ref{SR} and D  in Fig. \ref{RS}, the OP for R-D link of the RF-FSO cooperative SATN  is obtained.
%Based on \emph{Theorem \ref{OP2}}, \emph{Corollary \ref{coro1}} and \emph{Corollary \ref{coro2}}, the OP over R-D link is obtained, while considering  the
%randomness of the positions of R, and D, modeled by using stochastic geometry.

\begin{theorem}\label{OP2_final}
Considering the randomness of the positions of R  in Fig. \ref{SR} and D  in Fig. \ref{RS}, the OP for R-D link of the RF-FSO cooperative SATN  is obtained as
\begin{align}\label{op2_final_app}
{\mathrm{OP}}_{2}=&\frac{I}{2 \epsilon r_\mathrm{D}\bar H_\mathrm{R}\left(1-\cos\Psi_\mathrm{D}\right)}
\notag\\&\times
\left\{G_{r+2,3r+2}^{3r,2}\left[\epsilon \bar d_{\mathrm D,\mathrm{max}}^{2}\Big| \begin{array}{c}
1,2,K_{1}+1\\
K_{2}+1,0,1
\end{array}\right]\right.
\notag\\&~~~~
\left.-G_{r+2,3r+2}^{3r,2}\left[\epsilon {\bar d_{\mathrm D,\mathrm{min}}^{2}}\Big|\begin{array}{c}
1,2,K_{1}+1\\
K_{2}+1,0,1
\end{array}\right]\right\}
,\end{align}
where $\bar H_\mathrm{R}$, $\bar d_{\mathrm D,\mathrm{max}}^{2}$, and $\bar d_{\mathrm D,\mathrm{min}}^{2}$ are  the medians of   $ H_\mathrm{R}$, $d_{\mathrm D,\mathrm{max}}^{2}$, and $d_{\mathrm D,\mathrm{min}}^{2}$.
\begin{proof}
See Appendix E.
\end{proof}
\end{theorem}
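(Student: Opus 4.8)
The plan is to obtain $\mathrm{OP}_{2}$ by de-conditioning the result of \emph{Theorem \ref{OP2}} on the randomness of R's position. In \eqref{op2_final}, $\widetilde{\mathrm{OP_{2}}}$ is the outage probability of the R-D link \emph{given} a fixed height of R; since $d_{\mathrm D,\mathrm{min}}=r_\mathrm{D}-H_\mathrm{R}$ and $d_{\mathrm D,\mathrm{max}}=\sqrt{r_\mathrm{D}^{2}+H_\mathrm{R}^{2}-2r_\mathrm{D}H_\mathrm{R}\cos\Psi_\mathrm{D}}$ are deterministic functions of $H_\mathrm{R}$, while $r_\mathrm{D}$ and $\Psi_\mathrm{D}$ are fixed, the only remaining randomness in \eqref{op2_final} is carried by $H_\mathrm{R}=R_{\mathrm{earth}}+h_\mathrm{R}$. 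Hence the exact value is
\begin{align}
\mathrm{OP}_{2}=\int_{R_{\mathrm{earth}}+H_{\mathrm{min}}}^{R_{\mathrm{earth}}+R_\mathrm{S}}\widetilde{\mathrm{OP_{2}}}\left(H_\mathrm{R}\right)f_{H_\mathrm{R}}\left(H_\mathrm{R}\right)\dd H_\mathrm{R},
\end{align}
with $f_{H_\mathrm{R}}$ obtained from \eqref{cdf_dR} by a shift of the argument.

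First I would note that this integral does not admit a closed form: the integrand is a difference of two Meijer $G$-functions whose arguments $\epsilon d_{\mathrm D,\mathrm{max}}^{2}$ and $\epsilon d_{\mathrm D,\mathrm{min}}^{2}$ are quadratic in $H_\mathrm{R}$, and it is further weighted by the rational factor $1/\left(2\epsilon r_\mathrm{D}H_\mathrm{R}(1-\cos\Psi_\mathrm{D})\right)$, so no standard $G$-function integral identity applies. I would therefore replace the exact average by a median-based approximation: over the support $[R_{\mathrm{earth}}+H_{\mathrm{min}},R_{\mathrm{earth}}+R_\mathrm{S}]$ the map $H_\mathrm{R}\mapsto\widetilde{\mathrm{OP_{2}}}(H_\mathrm{R})$ is slowly varying and monotone (raising R shortens $d_{\mathrm D,\mathrm{min}}$ and $d_{\mathrm D,\mathrm{max}}$, hence lowers the FSO outage), so $\widetilde{\mathrm{OP_{2}}}(\bar H_\mathrm{R})=\mathrm{median}\{\widetilde{\mathrm{OP_{2}}}(H_\mathrm{R})\}$, and for a distribution that is not heavily skewed the mean is closely approximated by the median. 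Substituting $H_\mathrm{R}\to\bar H_\mathrm{R}$, $d_{\mathrm D,\mathrm{min}}^{2}\to\bar d_{\mathrm D,\mathrm{min}}^{2}=(r_\mathrm{D}-\bar H_\mathrm{R})^{2}$, and $d_{\mathrm D,\mathrm{max}}^{2}\to\bar d_{\mathrm D,\mathrm{max}}^{2}=r_\mathrm{D}^{2}+\bar H_\mathrm{R}^{2}-2r_\mathrm{D}\bar H_\mathrm{R}\cos\Psi_\mathrm{D}$ in \eqref{op2_final} then yields \eqref{op2_final_app}, where $\bar H_\mathrm{R}=R_{\mathrm{earth}}+\bar h_\mathrm{R}$ and $\bar h_\mathrm{R}$ solves $F_{d_\mathrm{R}}(\bar h_\mathrm{R})=1/2$, i.e. $2\bar h_\mathrm{R}^{3}-3H_{\mathrm{min}}\bar h_\mathrm{R}^{2}+H_{\mathrm{min}}^{3}=3V_{S_{1}}/(2\pi)$.

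I expect the genuine difficulty to be exactly the justification of this median substitution: there is no elementary bound guaranteeing that $\E[\widetilde{\mathrm{OP_{2}}}(H_\mathrm{R})]$ stays close to $\widetilde{\mathrm{OP_{2}}}(\bar H_\mathrm{R})$ across all parameter regimes, so the step is a controlled approximation rather than an identity; its accuracy, together with the use of $r_\mathrm{R}$ as a proxy for R's true altitude, ought to be confirmed against the Monte-Carlo results of Section VI. A secondary technical point is verifying that $d_{\mathrm D,\mathrm{max}}^{2}(H_\mathrm{R})=(H_\mathrm{R}-r_\mathrm{D}\cos\Psi_\mathrm{D})^{2}+r_\mathrm{D}^{2}\sin^{2}\Psi_\mathrm{D}$ is indeed monotone on the support: this holds because $H_\mathrm{R}\approx R_{\mathrm{earth}}$ lies well below the vertex abscissa $r_\mathrm{D}\cos\Psi_\mathrm{D}$, so that the three medians may be taken term by term and pushed through \eqref{op2_final} consistently.
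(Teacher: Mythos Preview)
Your approach is essentially the paper's: start from \eqref{op2_final}, observe that averaging over the distribution of $H_\mathrm{R}$ does not yield a closed form, and replace $H_\mathrm{R}$, $d_{\mathrm D,\mathrm{min}}^{2}$, $d_{\mathrm D,\mathrm{max}}^{2}$ by their medians. The difference lies only in how the substitution is justified. You lean on monotonicity of $H_\mathrm{R}\mapsto\widetilde{\mathrm{OP_{2}}}(H_\mathrm{R})$ to identify the median of the output with the output at the median, and then argue mean $\approx$ median. The paper instead invokes a pure scale-separation argument: since $R_{\mathrm{earth}}\approx 6371$ km and $H_\mathrm{D}$ is hundreds of km while $H_{\mathrm{min}}$ and $R_\mathrm{S}$ are hundreds of meters, the relative variations of $H_\mathrm{R}$, $d_{\mathrm D,\mathrm{min}}^{2}$, and $d_{\mathrm D,\mathrm{max}}^{2}$ over the support are of order $10^{-4}$ to $10^{-6}$ (they give explicit numbers), so the integrand in your averaging integral is essentially constant and any representative value suffices. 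This is simpler and stronger than your monotonicity route, and it also dissolves the secondary worry you raise about using $r_\mathrm{R}$ as a proxy for the true altitude $h_\mathrm{R}$: whatever the precise distribution of $h_\mathrm{R}$, its entire support contributes a negligible perturbation to $H_\mathrm{R}=R_{\mathrm{earth}}+h_\mathrm{R}$.
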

\begin{remark}\label{remark_op2_P}
%$\mathrm{OP_{2}}$
The OP over R-D link   approaches zero when the transmit power at R goes infinity.
\end{remark}
\begin{coro}\label{coro2}
The randomness of R have some effects on  $\mathrm{SOP}_{1}$ but have little effect on $\mathrm{OP_{2}}$, so $\mathrm{SOP}_{1}$ and $\mathrm{OP_{2}}$  can be viewed as independent variable.
\end{coro}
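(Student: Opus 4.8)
The plan is to show that, once we condition on the realization of R's position, the event ``the S--R link is insecure'' and the event ``the R--D link is in outage'' are governed by mutually independent fading and noise variables, so that the only statistical coupling between $\mathrm{SOP}_1$ and $\mathrm{OP}_2$ can enter through their common dependence on R's location; then to argue that this residual coupling is negligible because $\widetilde{\mathrm{OP}_2}$ is, to leading order, insensitive to where R sits.

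First I would fix the position $\mathbf{p}_\mathrm{R}$ of R, equivalently the pair $(d_\mathrm{R},h_\mathrm{R})$. Under the DF protocol the two hops occupy orthogonal time slots: the first-phase quantities $\gamma_\mathrm{R},\gamma_\mathrm{E}$ are functions of the Nakagami-$m$ fading $h_{\mathrm{SR}_l},h_{\mathrm{SE}_k}$ and of the noises $n_\mathrm{R},n_\mathrm{E}$, whereas $\gamma_\mathrm{D}$ is a function of the Gamma--Gamma coefficient $I_{\mathrm{fso}}$ and of $n_\mathrm{D}$; all of these are independent of one another and of $\mathbf{p}_\mathrm{R}$. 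Hence, conditionally on $\mathbf{p}_\mathrm{R}$, the probabilities $\Pr\{\gamma_\mathrm{R}\le\lambda\gamma_\mathrm{E}\}$ and $\Pr\{\gamma_\mathrm{D}<\gamma_{\mathrm{out}}\}$ factor, and by the law of total probability
\[
\Pr\{\gamma_\mathrm{R}\le\lambda\gamma_\mathrm{E},\ \gamma_\mathrm{D}<\gamma_{\mathrm{out}}\}=\E_{\mathbf{p}_\mathrm{R}}\!\left[\,\mathrm{SOP}_1(\mathbf{p}_\mathrm{R})\,\widetilde{\mathrm{OP}_2}(\mathbf{p}_\mathrm{R})\,\right],
\]
where $\mathrm{SOP}_1(\mathbf{p}_\mathrm{R})$, $\widetilde{\mathrm{OP}_2}(\mathbf{p}_\mathrm{R})$ denote the two conditional probabilities and $\E_{\mathbf{p}_\mathrm{R}}[\mathrm{SOP}_1(\mathbf{p}_\mathrm{R})]=\mathrm{SOP}_1$ by \eqref{sop_1}.

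Next I would bound the variation of $\widetilde{\mathrm{OP}_2}(\mathbf{p}_\mathrm{R})$ in $\mathbf{p}_\mathrm{R}$. By Proposition~\ref{dSR} and \eqref{op2_final}, $\widetilde{\mathrm{OP}_2}$ depends on R only through $H_\mathrm{R}=R_{\mathrm{earth}}+h_\mathrm{R}$, hence through $d_{\mathrm D,\mathrm{min}}=r_\mathrm{D}-H_\mathrm{R}$ and $d_{\mathrm D,\mathrm{max}}=\sqrt{r_\mathrm{D}^{2}+H_\mathrm{R}^{2}-2r_\mathrm{D}H_\mathrm{R}\cos\Psi_\mathrm{D}}$. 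Since $h_\mathrm{R}$ ranges only over $[H_{\mathrm{min}},R_\mathrm{S}]$ whereas $r_\mathrm{D}=R_{\mathrm{earth}}+H_\mathrm{D}$, the relative variation of $H_\mathrm{R}$ is $(R_\mathrm{S}-H_{\mathrm{min}})/(R_{\mathrm{earth}}+h_\mathrm{R})\ll 1$; a first-order Taylor expansion of the Meijer-$G$ expression in \eqref{op2_final} about the median $\bar H_\mathrm{R}$ then gives $\widetilde{\mathrm{OP}_2}(\mathbf{p}_\mathrm{R})=\mathrm{OP}_2+\delta(\mathbf{p}_\mathrm{R})$, with $\mathrm{OP}_2$ the constant of Theorem~\ref{OP2_final} and $|\delta(\mathbf{p}_\mathrm{R})|$ of order $R_\mathrm{S}/(R_{\mathrm{earth}}+H_\mathrm{D})$, i.e.\ negligible. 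Substituting this into the displayed identity and using $\E_{\mathbf{p}_\mathrm{R}}[\mathrm{SOP}_1(\mathbf{p}_\mathrm{R})]=\mathrm{SOP}_1$ yields
\[
\Pr\{\gamma_\mathrm{R}\le\lambda\gamma_\mathrm{E},\ \gamma_\mathrm{D}<\gamma_{\mathrm{out}}\}\approx\mathrm{SOP}_1\cdot\mathrm{OP}_2,
\]
so the two events behave as if statistically independent, while the genuine dependence of $\mathrm{SOP}_1$ on R's position, carried by $Z=d_\mathrm{R}^{\eta_1}/d_\mathrm{E}^{\eta_1}$ in Proposition~1, is retained intact.

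The main obstacle is making ``little effect'' quantitative, i.e.\ converting the scale separation $R_\mathrm{S}\ll R_{\mathrm{earth}}+H_\mathrm{D}$ into an explicit Lipschitz bound on the Meijer-$G$ function of \eqref{op2_final} viewed as a function of $H_\mathrm{R}$; since the derivative of that Meijer-$G$ is again a Meijer-$G$, the bound reduces to controlling its magnitude over the admissible $H_\mathrm{R}$-interval, and Remark~\ref{remark_op2_P} moreover shows the level of $\mathrm{OP}_2$ itself can be driven down by $P_\mathrm{R}$. Everything else --- the conditional factorization of the two hops and the median substitution justified by Theorem~\ref{OP2_final} --- is routine.
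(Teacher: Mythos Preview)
Your proposal is correct and follows essentially the same reasoning as the paper's Appendix~F: the only coupling between $\mathrm{SOP}_1$ and $\mathrm{OP}_2$ is through R's random position, and the scale separation $R_\mathrm{S}\ll R_{\mathrm{earth}}+H_\mathrm{D}$ renders $\mathrm{OP}_2$ effectively insensitive to $h_\mathrm{R}$. The paper's argument is terser --- it simply observes that once the median substitution of Theorem~\ref{OP2_final} is adopted, $\mathrm{OP}_2$ no longer depends on R at all, so the correlation is removed by construction --- whereas you supply the explicit conditional-factorization step and a Taylor-expansion bound on the Meijer-$G$ term, but the underlying idea is identical.
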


\begin{proof}
See Appendix F.
\end{proof}

\section{SOP of the RF-FSO cooperative SATN}
\subsection{Content Popularity Model }
In this section, we define the content popularity distribution model.
Note that content is delivered in form of  files in this paper.
The file in caches of R depends on its popularity. $\mathcal{N}$
indicates that collection of files with total files number $N=|\mathcal{N}|$.
 All files are arranged in the descending order of popularity, where
more popular files are ranked with smaller indices. We assume that
the probability of file requests follows the Zipf\textquoteright s
law, as given by \cite{MZIP}

\begin{equation}
p_{n}=\zeta_\mathrm{zip} n^{-\alpha},
\end{equation}
where $\zeta_\mathrm{zip}=\left(\sum_{n=1}^{N}\frac{1}{n^{\alpha}}\right)^{-1}$ is the normalization
factor to have $\sum_{n=1}^{N}p_{n}=1$ and the parameter $\alpha$ is
the skewness parameter which controls the distribution tail.

With the limited storage capacity, R adopts the MPC caching scheme. It means only the most popular $M$ files are stored at R, where  $M$ is the R's storage normalized by the size of each file.
\subsection{SOP of the RF-FSO cooperative SATN}

Based on the MPC caching scheme at R, and the above analysis of SOP/OP in two phases of the considered system in \emph{Theorem \ref{SOP1_theo}},  \emph{Theorem \ref{OP2_final}} and \emph{Corollary \ref{coro2}},   the lower bound of
the end-to-end
SOP over the uplink of the  RF-FSO cooperative SATN   is
\begin{align}\label{sop}
\mathrm{SOP}=\varphi\mathrm{OP_{2}}+\varphi^{'}\left[1-\left(1-\mathrm{SOP}_{1}^{\mathrm{L}}\right)\left(1-{\mathrm{OP}}_{2}\right)\right],
\end{align}
where $\varphi=\sum_{n=1}^{M}p_{n}$,  $\varphi^{'}=\sum_{n=M+1}^{N}p_{n}=1-\varphi$, $\mathrm{SOP}_{1}^{\mathrm{L}}$ and ${\mathrm{OP}}_{2}$ are  available in \eqref{sop1_final} and \eqref{op2_final_app}, respectively.

\begin{remark}\label{remark_M_alpha}
The larger the storage capacity of R (the larger the skewness parameter) is, the larger the value of $\varphi$, the stronger  dependencies of  OP over R-D link to SOP of the considered SATN are achieved.  On the contrary, the SOP of the considered SATN is more dependent on the SOP over S-R link.
\end{remark}
\begin{remark}\label{remark_OP_equal_SOP}
When  the requested file is cached at R, the   end-to-end  SOP of the RF-FSO cooperative SATN is the  OP over the R-D link, which disregards the S-R link.
\end{remark}
\begin{remark}\label{remark_MPC_SOP}
 The adopted MPC caching scheme can  bring the improved secrecy outage performance with a higher skewness
parameter.
\end{remark}

\section{Numerical Results and Discussion}
In this section,  Monte-Carlo simulation results will be presented to study
the performance of the considered system and demonstrate the correctness of the analytical results.
In the following, we run $1\times10^{6}$ trials of Monte-Carlo simulations, to model
the randomness of the positions of the considered Eves and UAV and channel gains over each link.
Besides, the transmit power at S is fixed \cite{Sharma_uav_cach}, the caching memory $M$ is based on the storage of R and the  size of the requested file model \cite{6763007}, $\Psi_\mathrm{D}=\frac{\pi}{12}$ is related to the beamwidth and the height of  R \cite{9684552} and the optical-to-electrical conversion coefficient is set as $\zeta= 0.5$ \cite{fso_DL_TWC}.
The main parameters are set as
$H_\mathrm{D}=550$ km, $R_\mathrm{S}=300$ m,
$R_{\mathrm{earth}}=6371$ km,
 $\mathcal{L}_\mathrm{r}=81$ dB,  $N_{R}=1$ W, $N_\mathrm{E}=1$ W, $m_\mathrm{R}=2$, $\Omega_\mathrm{R}=1.9$, $m_\mathrm{E}=2$, $\Omega_\mathrm{E}=0.5$, $P_\mathrm{S}=10$ dBW, and $C_\mathrm{th}=0.01$ bits/s.

Figs. 4-8 show the SOP versus $P_\mathrm{R}$ for different values of $a$, $b$, $r$, $M$, $\zeta$, $L$, $\alpha$, $H_\mathrm{D}$, $K$,  and different
caching schemes. These figures  show that the secrecy
performance improves with increasing $P_\mathrm{R}$, as the SNR at the satellite is improved.
Moreover,  the SOP becomes saturated when the transmit power at R gets higher. That is, the SOP exhibits a floor because the secrecy capacity will
become a constant, as reported in \cite{Lei_2017}. Here, we mainly discuss the influence of the transmit power at R on the SOP, while the  transmit power at S is fixed.
When the  transmit power at R increases enough, the OP of the R-D link  approaches zero, and the SOP of the considered SATN is the SOP of the S-R link, which is a fixed value in presence of the predetermined simulation conditions.
Since the transmit power at R is  constrained, the  simulation results can provide  some hints to researchers to design systems while considering both  reliability and power consumption, and a trade-off can be arrived to get relatively  high reliability and acceptable power consumption  with exiting constrains.

\begin{figure}[!h]
\setlength{\abovecaptionskip}{0pt}
\setlength{\belowcaptionskip}{10pt}
\centering
\includegraphics[height= 2.7 in, width=3.8 in]{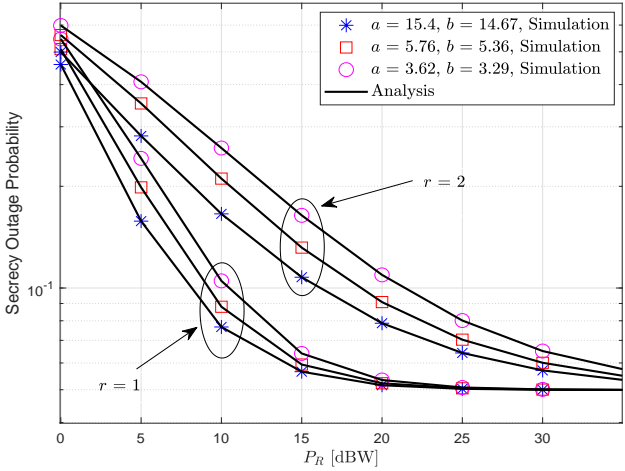}
\caption{SOP versus $P_R$ for
$H_\mathrm{D}=550$ km, $R_\mathrm{S}=300$ m, $H_{\mathrm{min}}=80$ m, $\Psi_\mathrm{D}=\frac{\pi}{12}$,
$R_{\mathrm{earth}}=6371$ km,  $\xi=1.1$,
  $\mathcal{L}_\mathrm{r}=81$ dB, $M=10$, $\alpha=1$, $N=10^{6}$, $N_\mathrm{R}=1$ W, $N_\mathrm{E}=1$
W, $K=3$, $L=8$, $m_\mathrm{R}=2$, $\Omega_\mathrm{R}=1.9$, $m_\mathrm{E}=2$, $\Omega_\mathrm{E}=0.5$, $P_\mathrm{S}=10$ dBW, and $C_\mathrm{th}=0.01$ bits/s.}
\label{FSO_ab}
\end{figure}
Fig. \ref{FSO_ab} represents the SOP for different values $(a, b)$ and
$r$.  One can also see that the SOP
with the weakest turbulence $(a = 15.4, b = 14.67)$ is lower
than that with strongest turbulence $(a = 3.62, b = 3.29)$.
 We know $r$ represents the detection scheme used at D, where $r = 1$ is for HD and $r = 2$ is for IM/DD.
By varying $r$ and keeping $(a, b)$ fixed
in Fig. \ref{FSO_ab}, the HD detection method can lead to better secrecy
performance than IM/DD method.  The reason for this
is that the SNR obtained with the HD method is higher than
that of IM/DD.
Fig. \ref{FSO_ab} implies that the weaker turbulence on FSO link and the HD detection method used at D
can enhance the secure performance of the considered SATN.

\begin{figure}[!h]
\setlength{\abovecaptionskip}{0pt}
\setlength{\belowcaptionskip}{10pt}
\centering
\includegraphics[height= 2.7 in, width=3.8 in]{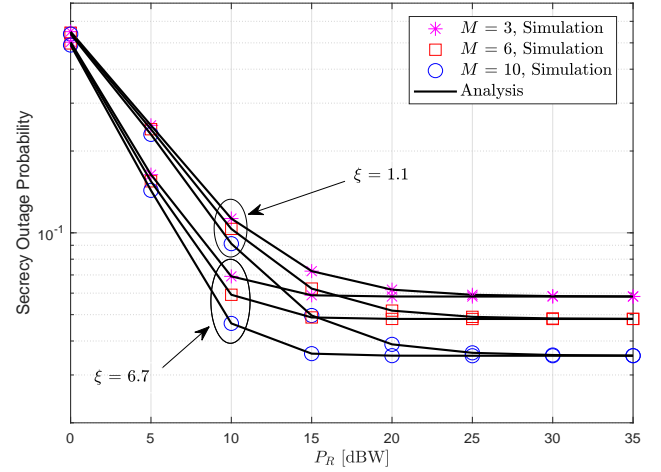}
\caption{SOP versus $P_R$ for
$H_\mathrm{D}=550$ km, $R_\mathrm{S}=300$ m, $H_{\mathrm{min}}=80$ m, $\Psi_\mathrm{D}=\frac{\pi}{12}$,
$R_{\mathrm{earth}}=6371$ km, $a=15.47$, $b=14.6$,
$r=2$,  $\mathcal{L}_\mathrm{r}=81$ dB, $\alpha=1$, $N=10^{6}$, $N_\mathrm{R}=1$ W, $N_\mathrm{E}=1$
W, $K=3$, $L=8$, $m_\mathrm{R}=2$, $\Omega_\mathrm{R}=1.9$, $m_\mathrm{E}=2$, $\Omega_\mathrm{E}=0.5$, $P_\mathrm{S}=10$ dBW, and $C_\mathrm{th}=0.01$ bits/s.}
\label{xi_M}
\end{figure}
In Fig. \ref{xi_M}, the size of the caching memory shows a positive influence on the SOP of
the considered system, since a large $M$  means a large probability of omitting the terrestrial terminal-relay link, which is  consistent with \emph{Remark \ref{remark_M_alpha}}.
The MPC caching scheme achieves the largest cooperative diversity gain with largest $M$ which improve the secure performance.
Moreover, the SOP with lower $\xi$ is higher than that with larger $\xi$, because a larger $\xi$ means high pointing accuracy over R-D link.
Simulation results can provide  some hints to researchers to upgrade SOP in terms of pointing accuracy and caching memory of R with power consumption constraints.

\begin{figure}[!h]
\setlength{\abovecaptionskip}{0pt}
\setlength{\belowcaptionskip}{10pt}
\centering
\includegraphics[height= 2.7 in, width=3.8 in]{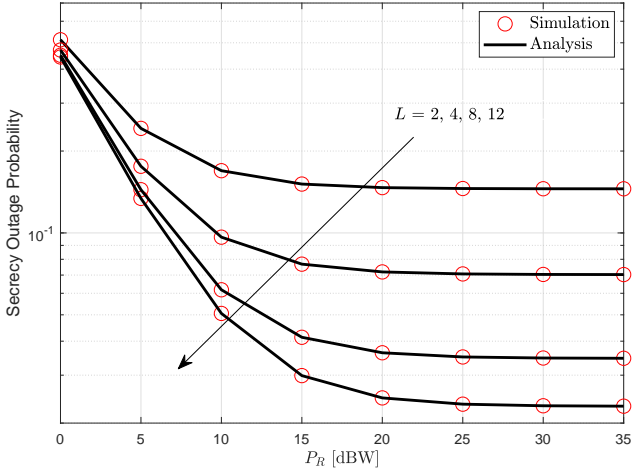}
\caption{SOP versus $P_R$ for
$H_\mathrm{D}=550$ km, $R_\mathrm{S}=300$ m, $H_{\mathrm{min}}=80$ m, $\Psi_\mathrm{D}=\frac{\pi}{12}$,
$R_{\mathrm{earth}}=6371$ km, $a=15.47$, $b=14.6$, $\xi=1.1$,
$r=2$,  $\mathcal{L}_\mathrm{r}=81$ dB, $M=10$, $\alpha=1$, $N=10^{6}$, $N_\mathrm{R}=1$ W, $N_\mathrm{E}=1$
W, $K=3$, $m_\mathrm{R}=2$, $\Omega_\mathrm{R}=1.9$, $m_\mathrm{E}=2$, $\Omega_\mathrm{E}=0.5$, $P_\mathrm{S}=10$ dBW, and $C_\mathrm{th}=0.01$ bits/s.}
\label{Pr_L}
\end{figure}

The influence of the antenna number at R on the SOP
performance is depicted in Fig. \ref{Pr_L}. As expected,
$L$ exhibits a positive effect.
A large $L$  results in a small
SOP, meaning better secrecy performance, since a large $L$
can bring a large diversity gain at R, as described in \emph{Remark \ref{remark_L_SOP}}.
Moreover, the SOP with a lower $\xi$ is higher than that with a larger $\xi$, because a larger $\xi$ means high pointing accuracy over R-D link.
Simulation results can offer  some hints to researchers to upgrade SOP in terms of pointing accuracy and caching memory of R with power consumption constraints.

In Fig. \ref{Pr_Hmin_alpha}, the SOP performance is studied
for various $H_{\mathrm{min}}$ and $\alpha$.
It is obvious that a small $H_{\mathrm{min}}$ leads to a small SOP.
This observation comes from the fact that a small $H_{\mathrm{min}}$ denotes
a low  path-loss average for the signal transmitted over S-R link, while the path-loss average for the signal transmitted over S-D link is fixed with the same coverage space.
Moreover, a high $\alpha$ leads to improved SOP. This is because a low $\alpha$ means that more files have similar popularity and a high $\alpha$ means that a few files have a very high popularity and a large number of files have very low popularity. In other
words, $\alpha$ indicates a high or low degree of request concentration.
This implies that a large $\alpha$ and the adopted MPC caching scheme can jointly bring improved secrecy outage performance, which is  illustrated in  \emph{Remark \ref{remark_M_alpha}} and  \emph{\ref{remark_MPC_SOP}}.

To provide a fair performance comparison in terms of the secrecy performance of the considered SATN, Fig. \ref{Pr_K} describes the SOP employing MPC caching, UC caching, and no caching. Obviously, the MPC caching  achieves the
largest cooperative diversity gain  and is the best in terms of SOP with the file request following the Zipf\textquoteright s law.
Moreover, Fig. \ref{Pr_K} presents the secrecy outage performance  for
various $K$, while $P_R$ increasing. Obviously, $K$ shows a negative
impact on the secrecy outage performance, as a large $K$ means that Eves are distributed around terrestrial terminals more densely. This explains that SOP degrades when the number of Eves increases.

\begin{figure}[!h]
\setlength{\abovecaptionskip}{0pt}
\setlength{\belowcaptionskip}{10pt}
\centering
\includegraphics[height= 2.7 in, width=3.8 in]{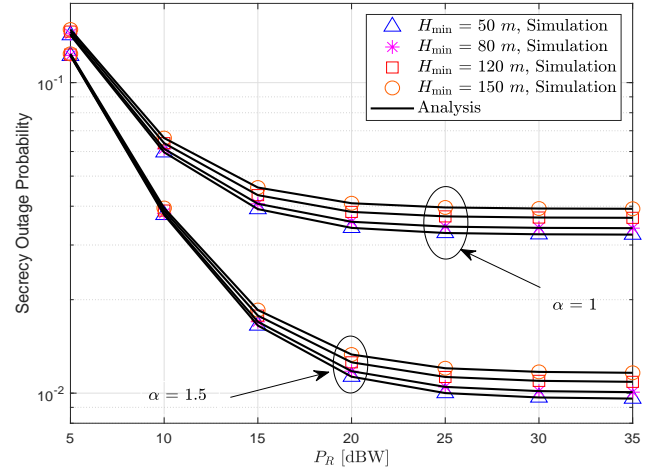}
\caption{SOP versus $P_R$ for
$H_\mathrm{D}=550$ km, $R_\mathrm{S}=300$ m,  $\Psi_\mathrm{D}=\frac{\pi}{12}$,
$R_{\mathrm{earth}}=6371$ km, $a=15.47$, $b=14.6$, $\xi=1.1$,
$r=2$,  $\mathcal{L}_\mathrm{r}=81$ dB, $M=10$, $\alpha=1$, $N=10^{6}$, $N_\mathrm{R}=1$ W, $N_\mathrm{E}=1$
W, $K=3$, $L=8$, $m_\mathrm{R}=2$, $\Omega_\mathrm{R}=1.9$, $m_\mathrm{E}=2$, $\Omega_\mathrm{E}=0.5$, $P_\mathrm{S}=10$ dBW, and $C_\mathrm{th}=0.01$ bits/s.}
\label{Pr_Hmin_alpha}
\end{figure}

\begin{figure}[!h]
\setlength{\abovecaptionskip}{0pt}
\setlength{\belowcaptionskip}{10pt}
\centering
\includegraphics[height= 2.7 in, width=3.8 in]{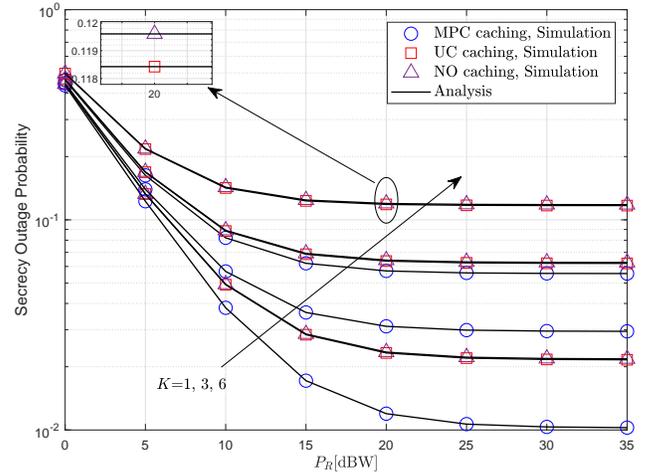}
\caption{
SOP versus $P_R$ for
$H_\mathrm{D}=550$ km, $R_\mathrm{S}=300$ m, $H_{\mathrm{min}}=80$ m, $\Psi_\mathrm{D}=\frac{\pi}{12}$,
$R_{\mathrm{earth}}=6371$ km, $a=15.47$, $b=14.6$, $\xi=1.1$,
$r=2$,  $\mathcal{L}_\mathrm{r}=81$ dB, $M=100$, $\alpha=1$, $N=10^{4}$, $N_\mathrm{R}=1$ W, $N_\mathrm{E}=1$
W,  $L=8$, $m_\mathrm{R}=2$, $\Omega_\mathrm{R}=1.9$, $m_\mathrm{E}=2$, $\Omega_\mathrm{E}=0.5$, $P_\mathrm{S}=10$ dBW, and $C_\mathrm{th}=0.01$ bits/s.}
\label{Pr_K}
\end{figure}

\section{Conclusions}
In this paper, we have investigated the SOP of the uplink transmission of an RF-FSO hybrid cooperative SATN in the presence of a group of aerial Eves. In the considered cooperative SATN, S transmits its information to D via the help of a cache-enabled R, while a group of Eves try to overhear the transmitted confidential information. Considering the randomness of the positions of R, D, and Eves, and employing stochastic geometry, the secrecy outage performance of the cooperative uplink transmission in the considered SATN has been investigated and the closed-form analytical expression for the end-to-end SOP has been derived. Numerical results have demonstrated the accuracy of the derived expressions and presented some trends of SOP by altering some system parameters of interest, which  can facilitate
researchers to design systems considering both reliability and
exiting constrains. For example, the transmit power
at R, the number of antennas at R, the  minimum height of R, the detection
scheme used at D, etc, while a trade-off can be arrived to get
relatively high reliability with exiting constrains.
The ergodic secrecy capacity  of the considered SATN  are expected to be investigated in future work.

\ifCLASSOPTIONcaptionsoff
  \newpage
\fi

\appendices

\section{}

If $Z=\frac{d_\mathrm{R}^{\eta_1}}{d_\mathrm{E}^{\eta_1}}\leq1$, the CDF of $Z$ is
\begin{align}
F_{Z}\left(z\right)=&\intop_{H_{\mathrm{min}}^{\eta_1}/z}^{R_\mathrm{S}^{\eta_1}}\intop_{H_{\mathrm{min}}^
{\eta_1}/y}^{z}\frac{2\pi y}{\eta_1 V_{ S_{1}}}\left(\left(yu\right)^{\frac{3}{\eta_1}-1}-H_{\mathrm{min}}\left(yu\right)^{\frac{2}
{\eta_1}-1}\right)
\notag\\
&
\times\sum_{f=1}^{K}{K \choose f}\left(-1\right)^{f+1}\frac{3f}{\eta_1 R_\mathrm{S}^{3f}}y^{\frac{3f}{\eta_1}-1}dudy
.\end{align}

After some mathematical manipulations by using polynomial integration,  $F_{Z}\left(z\right)$ is expressed as \eqref{F_z_large} shown on the top of next page.
\begin{figure*}
\begin{align}\label{F_z_large}
F_{Z}\left(z\right)=&\frac{2\pi}{V_{ S_{1}}}\sum_{f=1}^{K}{K \choose f}\left(-1\right)^{f+1}\frac{3f}{R_\mathrm{S}^{3f}}
\left[
\frac{1}{9f+9}\left(R_\mathrm{S}^{3f+3}z^{\frac{3}{\eta_1}}-H_{\mathrm{min}}^{3f+3}z^{-\frac{3f}{\eta_1}}
\right)-\frac{1}{9f}H_{\mathrm{min}}^{3}\left(R_\mathrm{S}^{3f}-H_{\mathrm{min}}^{3f}z^{-\frac{3f}{\eta_1}}
\right)
\right.
\notag\\
&
\left.
-\frac{1}{2\left(3f+2\right)}H_{\mathrm{min}}\left(R_\mathrm{S}^{3f+2}z^{\frac{2}{\eta_1}}-H_{\mathrm{min}}
^{3f+2}z^{-\frac{3f}{\eta_1}}\right)+\frac{1}{6f}H_{\mathrm{min}}^{3}\left(R_\mathrm{S}^{3f}-H_{\mathrm{min}}
^{3f}z^{-\frac{3f}{\eta_1}}\right)
\right]
\end{align}
\rule{18cm}{0.01cm}
\end{figure*}

Using the differentiation, the PDF of $Z$ is derived as
\begin{align}
f_{Z}\left(z\right)=\sum_{f=1}^{K}\left(B_{1,f}z^{\frac{3}{\eta_1}-1}+B_{2,f}z^{\frac{2}{\eta_1}-1}+B_{3,f}z^{-\frac{3f}{\eta_1}-1}\right)
,\end{align}
where $B_{1,f}$, $B_{2,f}$ and $B_{3,f}$ are expressed in \eqref{B1f}, \eqref{B2f}, and \eqref{B3f}, respectively.

Similarly, if $Z=\frac{d_\mathrm{R}^{\eta_1}}{d_\mathrm{E}^{\eta_1}}>1$, the CDF of $Z$ is
\begin{align}
F_{Z}\left(z\right)=F_{1}\left(z\right)+F_{2}\left(z\right)
,\end{align}
where
\begin{align}
F_{1}\left(z\right)=&\intop_{R_\mathrm{S}^{\eta_1}/z}^{R_{S}^{\eta_1}}\intop_{H_{\mathrm{min}}^{\eta_1}/y}^{R_\mathrm{S}^{\eta_1}
/y}y\frac{2\pi}{\eta_1 V_{\mathrm S_{1}}}\left(\left(yu\right)^{\frac{3}{\eta_1}-1}-H_{\mathrm{min}}\left(yu\right)^{\frac{2}{\eta_1}-1}\right)
\notag\\
&
\times\sum_{f=1}^{K}{K \choose f}\left(-1\right)^{f+1}\frac{3f}{\eta_1 R_\mathrm{S}^{3f}}y^{\frac{3f}{\eta_1}-1}dudy
\end{align}
and
\begin{align}
F_{2}\left(z\right)=&\intop_{H_{\mathrm{min}}^{\eta_1}/z}^{R_\mathrm{S}^{\eta_1}/z}\intop_{H_{\mathrm{min}}^{\eta_1}/y}^{z}y
\frac{2\pi}{\eta_1 V_{S_{1}}}\left(\left(yu\right)^{\frac{3}{\eta_1}-1}-H_{\mathrm{min}}\left(yu\right)^{\frac{2}{\eta_1}-1}\right)
\notag\\
&
\times\sum_{f=1}^{K}{K \choose f}\left(-1\right)^{f+1}\frac{3f}{\eta_1 R_\mathrm{S}^{3f}}y^{\frac{3f}{\eta_1}-1}dudy
.\end{align}

After some mathematical manipulations, $F_{1}\left(z\right)$ and $F_{2}\left(z\right)$ are obtained as \eqref{F1z} and \eqref{F2z} shown on the top of next page.

\begin{figure*}
\begin{align}\label{F1z}
F_{1}\left(z\right)=\frac{2\pi}{\eta_1 V_{ S_{1}}}\sum_{f=1}^{K}{K \choose f}\left(-1\right)^{f+1}\frac{3f}{ R_\mathrm{S}^{3f}}\left[\frac{\eta_1}{3f}R_\mathrm{S}^{3f}\left(1-z^{-\frac{3f}{\eta_1}}\right)\right]
\left(\frac{1}{3}R_\mathrm{S}^{3}-\frac{1}{2}H_{\mathrm{min}}R_\mathrm{S}^{2}+\frac{1}{6}H_{\mathrm{min}}^{3}
\right)
\end{align}
\rule{18cm}{0.01cm}
\end{figure*}

\begin{figure*}
\begin{align}\label{F2z}
F_{2}\left(z\right)=&\frac{2\pi}{V_{ S_{1}}}\sum_{f=1}^{K}{K \choose f}\left(-1\right)^{f+1}\frac{3f}{R_\mathrm{S}^{3f}}z^{-\frac{3f}{\eta_1}}\left(\frac{1}{3\left(3f+3\right)}R_\mathrm{S}
^{3f+3}-\frac{1}{2
\left(3f+2\right)}R_\mathrm{S}^{3f+2}H_{\mathrm{min}}\right)
\notag\\
&
+\left(\frac{1}{2\left(3f+2\right)}+\frac{1}{9f}-\frac{1}{3\left(3f+3\right)}-\frac{1}{6f}\right)H_{\mathrm{min}}^{3f+3}+
\left(\frac{1}{6f}-\frac{1}{9f}\right)R_\mathrm{S}^{3f}H_{\mathrm{min}}^{3}
\end{align}
\rule{18cm}{0.01cm}
\end{figure*}

Using the differentiation, the PDF of $Z$ is derived as
\begin{align}
f\left(z\right)=\sum_{f=1}^{K}A_{f}z^{-\frac{3f}{\eta_1}-1}
,
\end{align}
where
\begin{align}
A_{f}=A_{1,f}+A_{2,f}
.
\end{align}
in which $A_{1,f}$ and $A_{2,f}$ are expressed in \eqref{A1f} and \eqref{A2f}, respectively.

Thus, we can derive the PDF of $Z$ as \eqref{dZ}.
\section{}
According to \eqref{sop_1}, $\mathrm{SOP}_{1}$ can be presented as
\begin{align}
\mathrm{SOP}_{1}&=\Pr\left\{ \frac{P_\mathrm{S}\left\Vert \mathbf{h}_\mathrm{SR}\right\Vert ^{2}}{N_\mathrm{R}d_\mathrm{R}^{\eta_1}}\leq\lambda\frac{P_\mathrm{S}\left\Vert \mathbf{h}_{\mathrm {SE}^{*}}\right\Vert ^{2}}{N_\mathrm{E}d_\mathrm{E}^{\eta_1}}+\lambda-1\right\}
.
\end{align}

We know $\lambda=2^{C_\mathrm{th}}>1$ because $C_\mathrm{th}>0$. Thus, we have
\begin{align}
\mathrm{SOP}_{1}&\geq\Pr\left\{ \frac{P_\mathrm{S}\left\Vert \mathbf{h}_\mathrm{SR}\right\Vert ^{2}}{N_\mathrm{R}d_\mathrm{R}^{\eta_1}}\leq\lambda\frac{P_\mathrm{S}\left\Vert \mathbf{h}_{\mathrm{SE}^{*}}\right\Vert ^{2}}{N_\mathrm{E}d_\mathrm{E}^{\eta_1}}\right\}
\notag\\
&
=\Pr\left\{ \left\Vert \mathbf{h}_\mathrm{SR}\right\Vert ^{2}\leq N_\mathrm{R}d_\mathrm{R}^{\eta_1}\lambda\frac{\left\Vert \mathbf{h}_{\mathrm{SE}^{*}}\right\Vert ^{2}}{N_\mathrm{E}d_\mathrm{E}^{\eta_1}}\right\} =\mathrm{SOP}_{1}^{\mathrm{L}}
.
\end{align}

Let $X=\left\Vert \mathbf{h}_{\mathrm{SE}^{*}}\right\Vert ^{2}$, $Z=\frac{d_\mathrm{R}^{\eta_1}}{d_\mathrm{E}^{\eta_1}}$.   $\mathrm{SOP}_{1}^{\mathrm{L}}$ can be obtained as
\begin{align}
&\mathrm{SOP}_{1}^{\mathrm{L}}=\Pr\left\{ \left\Vert \mathbf{h}_\mathrm{SR}\right\Vert ^{2}\leq\frac{N_\mathrm{R}}{N_\mathrm{E}}\lambda XZ\right\}
\notag\\&
=\Pr\left\{ \left\Vert \mathbf{h}_\mathrm{SR}\right\Vert ^{2}\leq\frac{N_\mathrm{R}}{N_\mathrm{E}}\lambda XZ\right\}
\notag\\&
=\intop_{\frac{H_{\mathrm{min}}^{\eta_1}}{R_\mathrm{S}^{\eta_1}}}^{\infty}\intop_{0}^{\infty}F_{\left\Vert \mathbf{h}_\mathrm{SR}\right\Vert ^{2}}\left(\frac{N_\mathrm{R}}{N_\mathrm{E}}\lambda xz\right)f_{X}\left(x\right)dxf_{Z}\left(z\right)dz
.\end{align}

Substituting the CDF of $\left\Vert \mathbf{h}_\mathrm{SR}\right\Vert ^{2}$, $\mathrm{SOP}_{1}^{\mathrm{L}}$ is expressed as
\begin{align}
\mathrm{SOP}_{1}^{\mathrm{L}}=&1-\sum_{k=0}^{Lm_\mathrm{R}-1}\frac{\left(a_{0}\right)^{k}}{k!}\intop_{\rho_Z}
^{\infty}\intop_{0}^{\infty}\exp\left(-a_{0}xz\right)
\notag\\&
\times x^{k}z^{k}f_{X}\left(x\right)dxf_{Z}\left(z\right)dz
,\end{align}
where $a_{0}=\lambda \lambda_{R}\frac{N_\mathrm{R}}{N_\mathrm{E}}$ and $\rho_Z=\frac{H_{\mathrm{min}}^{\eta_1}}{R_\mathrm{S}^{\eta_1}}$.

Substituting the PDF of $\left\Vert \mathbf{h}_{\mathrm{SE}^{*}}\right\Vert ^{2}$ and $\frac{d_\mathrm{R}^{\eta_1}}{d_\mathrm{E}^{\eta_1}}$,  $\mathrm{SOP}_{1}^{\mathrm{L}}$ is obtained as \eqref{sop1_pdf} shown on the top of next page.
\begin{figure*}
\begin{align}\label{sop1_pdf}
&\mathrm{SOP}_{1}^{\mathrm{L}}=1-\frac{\lambda_\mathrm{E}^{m_\mathrm{E}}}{\Gamma\left(m_\mathrm{E}\right)}\sum_{k=0}^
{Lm_{R}-1}\frac{a_{0}^{k}}{k!}\intop_{0}^{\infty}x^{k+m_{E}-1}\exp\left(-\lambda_{E}x\right)
\notag\\&~~~~
\times\left[\sum_{f=1}^{K}\intop_{\rho_Z}^{1}\exp\left(-a_{0}xz\right)\left(B_{1,f}z^{k+\frac{3}
{\eta_1}-1}+B_{2,f}z^{k+\frac{2}{\eta_1}-1}+B_{3,f}z^{k-\frac{3f}{\eta_1}-1}\right)dz+\sum_{f=1}^{K}
A_{f}\intop_{1}^{\infty}\exp\left(-a_{0}xz\right)z^{k-\frac{3f}{\eta_1}-1}dz\right]dx
\end{align}
\rule{18cm}{0.01cm}
\end{figure*}

Changing the range of integration, $\mathrm{SOP}_{1}^{\mathrm{L}}$ is shown in \eqref{six_group_sop1} on the top of next page, where

\begin{figure*}
\begin{align}\label{six_group_sop1}
\mathrm{SOP}_{1}^{\mathrm{L}}=&1-\frac{\lambda_\mathrm{E}^{m_\mathrm{E}}}{\Gamma\left(m_\mathrm{E}\right)}\sum_{k=0}^{Lm_\mathrm{R}-1}\sum_{f=1}^{K}\frac{a_{0}^{k}}{k!}
\left\{
B_{1,f}\left[H\left(\rho,\lambda_\mathrm{E},a_{0},k+m_\mathrm{E}-1,k+\frac{3}{\eta_1}-1\right)-H\left(1,\lambda_\mathrm{E},a_{0},k+m_\mathrm{E}-1,k+\frac{3}{\eta_1}-1\right)\right]
\right.\notag\\&
+B_{2,f}\left[H\left(\rho,\lambda_\mathrm{E},a_{0},k+m_\mathrm{E}-1,k+\frac{2}{\eta_1}-1\right)-H\left(1,\lambda_\mathrm{E},a_{0},k+m_\mathrm{E}-1,k+\frac{2}{\eta_1}-1\right)\right]
\notag\\&\left.
+B_{3,f}\left[H\left(\rho,\lambda_\mathrm{E},a_{0},k+m_\mathrm{E}-1,k-\frac{3f}{\eta_1}-1\right)+(A_{f}-B_{3,f})H\left(1,\lambda_\mathrm{\mathrm E},a_{0},k+m_{E}-1,k-\frac{3f}{\eta_1}-1\right)\right]
\right\}
\end{align}
\rule{18cm}{0.01cm}
\end{figure*}
\begin{align}
H\left(\varrho,a,b,q,p\right)=\intop_{0}^{\infty}\intop_{\rho_Z}^{\infty}\exp\left(-ax\right)x^{q}\exp\left(-bxz\right)z^{p}dxdz
.\end{align}
Next, we will analyze the function $H\left(\varrho,a,b,q,p\right)$. When $p\geq0$, using \cite[Eq. (3.381.3)]{Gradshteyn}, $H\left(\varrho,a,b,q,p\right)$ is expressed as
\begin{align}
&H\left(\varrho,a,b,q,p\right)=\intop_{0}^{\infty}x^{q}\exp\left(-ax\right)\left[\intop_{\rho_Z}^{\infty}\exp\left(-bxz\right)z^{p}dz\right]dx
\notag\\&
~~~~~~~~~~~~~~~~=\frac{1}{b^{p+1}}\intop_{0}^{\infty}x^{q-p-1}\exp\left(-ax\right)\Gamma\left(p+1,b\rho_Z x\right)dx
.\end{align}

Then, using \cite[Eq. (6.455.1)]{Gradshteyn}, $H\left(\varrho,a,b,q,p\right)$ can be derived as
\begin{align}\label{H_fun1}
H\left(\varrho,a,b,q,p\right)=&\frac{\varrho^{p+1}\Gamma\left(q+1\right)}{\left(k-p\right)\left(\varrho b+a\right)^{q+1}}
\notag\\
&\times{}_2F_1\left(1,q+1;q-p+1;\frac{a}{\varrho b+a}\right)
.\end{align}

When $p<0$, using \cite[Eq. (8.19.2)]{DLMF}, $H\left(\varrho,a,b,q,p\right)$ is obtained as
\begin{align}
H\left(\varrho,a,b,q,p\right)=\varrho^{p+1}\intop_{0}^{\infty}x^{q}\exp\left(-bx\right)E_{\left(-p\right)}\left(a\varrho x\right)dx
.\end{align}

Then, using \cite[Eq. (8.19.25)]{DLMF}, $H\left(\varrho,a,b,q,p\right)$ is derived as
\begin{align}\label{H_fun_abpq}
H\left(\varrho,a,b,q,p\right)=&\frac{\varrho^{p+1}\Gamma\left(q+1\right)}{\left(k-p\right)\left(\varrho b+a\right)^{q+1}}
\notag\\
&\times
{}_2F_1\left(1,q+1;q-p+1;\frac{a}{\varrho b+a}\right)
,\end{align}
which is as same as \eqref{H_fun1}.

Thus, substituting  \eqref{H_fun_abpq} into \eqref{six_group_sop1}, $\mathrm{SOP}_{1}^{\mathrm{L}}$ is derived as \eqref{sop1_final}, which yields \emph{Theorem \ref{SOP1_theo}}.

\section{}
The area of the target space shown in Fig. \ref{RS}, namely, the area of the
spherical doom with the radius $r_\mathrm{D}$, can be expressed as
\begin{align}
A_\mathrm{D}=2\pi r_\mathrm{D}^{2}\left(1-\cos\Psi_\mathrm{D}\right).
\end{align}

Thus, employing the method adopted in Appendix A of \cite{Pan_3D_VLC_integration}, and
considering that D is uniformly distributed, the CDF of $\theta_\mathrm{D}$ can be given as
\begin{align}
F_{\theta_\mathrm{D}}\left(x\right)&=\intop_{0}^{2\pi}\intop_{0}^{x}\frac{1}{A_\mathrm{D}}r_\mathrm{D}^{2}\sin\theta d\theta d\psi_\mathrm{D}
\notag\\
&
=2\pi\frac{r_\mathrm{D}^{2}}{A_{D}}\left(1-\cos x\right)
\notag\\
&
=\begin{cases}
\frac{1-\cos x}{1-\cos\Psi_\mathrm{D}}, &\mathrm{if}~0\leq x\leq\Psi_\mathrm{D};\\
0, & \mathrm{else}
\end{cases}.
\end{align}

So, it is easy to have the PDF of $\theta_\mathrm{D}$ as
\begin{align}
f_{\theta_\mathrm{D}}\left(x\right)=\frac{\sin x}{1-\cos\Psi_\mathrm{D}}.
\end{align}

To facilitate the following derivation, we denote
$w=d_\mathrm{D}^{2}=r_\mathrm{D}^{2}+H_\mathrm{R}^{2}-2r_\mathrm{D}H_\mathrm{R}\cos\theta_\mathrm{D}$, and the relationship between $\theta_\mathrm{D}$ and $w$ is
\begin{align}
h(\theta_{D})=\arccos\frac{r_{D}^{2}+H_{R}^{2}-w}{2r_{D}H_{R}}
.
\end{align}

Then, it is easy to write the PDF of $d_\mathrm{D}^{2}$ as
\begin{align}
&f_{d_\mathrm{D}^{2}}\left(w\right)=f_{\theta_\mathrm{D}}\left(w\right)\left|h^{'}(\theta_\mathrm{S})\right|
\notag\\
&=\frac{\sin\left(\arccos\frac{r_\mathrm{D}^{2}+H_\mathrm{R}^{2}-w}{2r_\mathrm{D}H_\mathrm{R}}\right)}{1-\cos\Psi_\mathrm{D}}
\frac{1}{2r_\mathrm{D}H_\mathrm{R}\sqrt{1-\left(\frac{r_\mathrm{D}^{2}+H_\mathrm{R}^{2}-w}{2r_\mathrm{D}H_\mathrm{R}}\right)^{2}}}
\notag\\
&=\frac{1}{2r_\mathrm{D}H_\mathrm{R}\left(1-\cos\Psi_\mathrm{D}\right)}.
\end{align}
\section{}

Combing \eqref{CDF_FSO} and \eqref{op2_define}, $\mathrm{OP_{2}}$ is expressed as
\begin{align}
\widetilde{\mathrm{OP_{2}}}=IG_{r+1,3r+1}^{3r,1}\left[\rho\gamma_{\mathrm{out}}|\begin{array}{c}
1,K_{1}\\
K_{2},0
\end{array}\right]
.
\end{align}

Substituting \eqref{rs_ave}, $\widetilde{\mathrm{OP_{2}}}$ is obtained as
\begin{align}
\widetilde{\mathrm{OP_{2}}}&=IG_{r+1,3r+1}^{3r,1}\left[\frac{\left(hab\right)^{r}}{\frac{P_\mathrm{R}
\zeta^{2}\mathcal{L}_\mathrm{r}^{2}}{\mathcal{L}_{\mathrm{FS}}\sigma_\mathrm{d}^{2}}r^{2r}}\gamma_{\mathrm{out}}|\begin{array}{c}
1,K_{1}\\
K_{2},0
\end{array}\right]
\notag\\
&=IG_{r+1,3r+1}^{3r,1}\left[d_\mathrm{D}^{2}\frac{\left({4\pi f_c}\right)^{2}\sigma_\mathrm{d}^{2}\left(hab\right)^{r}}{P_\mathrm{R}{c}^2\zeta^{2}\mathcal{L}_\mathrm{r}^{2}r^{2r}}
\gamma_{\mathrm{out}}|\begin{array}{c}
1,K_{1}\\
K_{2},0
\end{array}\right]
.
\end{align}

To facilitate the following analysis, we also denote
$\epsilon=\frac{\left({4\pi f_\mathrm c}\right)^{2}\sigma_{d}^{2}\left(hab\right)^{r}}{P_\mathrm{R}c^2\zeta^{2}\mathcal{L}_\mathrm{r}^{2}r^{2r}}\gamma_{\mathrm{out}}$,
$Z=d_\mathrm{D}^{2}$ and the PDF of $G=Z\epsilon$ is

\begin{align}
f_{\epsilon d_{D}^{2}}\left(g\right)=\frac{1}{2\epsilon r_{D}H_{R}\left(1-\cos\Psi_{D}\right)},
\end{align}
and the range of $g$ is $\epsilon d_{\mathrm D,\mathrm{min}}^{2}\leq g\leq\epsilon d_{\mathrm D,\mathrm{max}}^{2}$.

Thus, $\widetilde{\mathrm{OP_{2}}}$ can be given as
\begin{align}
&\widetilde{\mathrm{OP_{2}}}=\intop_{\epsilon d_{\mathrm D,\mathrm{min}}^{2}}^{\epsilon d_{\mathrm D,\mathrm{max}}^{2}}IG_{r+1,3r+1}^{3r,1}\left[g\Big|\begin{array}{c}
1,K_{1}\\
K_{2},0
\end{array}\right]f_{\epsilon d_{D}^{2}}\left(g\right)dg
\notag\\&
=\frac{I}{2 \epsilon r_\mathrm{D}H_\mathrm{R} \left(1-\cos\Psi_\mathrm{D}\right)}\intop_{\epsilon d_{\mathrm D,\mathrm{min}}^{2}}^{\epsilon d_{\mathrm D,\mathrm{max}}^{2}}G_{r+1,3r+1}^{3r,1}\left[g\Big|\begin{array}{c}
1,K_{1}\\
K_{2},0
\end{array}\right]dg
.
\end{align}

Utilizing \cite[Eq. (9.301)]{Gradshteyn}, the closed-form analytical expression of $\widetilde{\mathrm{OP_{2}}}$ is obtained as \eqref{op2_final}.

\section{}
When the height of R is unknown and D  is uniformly distributed in the space
shown in Fig. \ref{RS}, $H_\mathrm R$, $d_{\mathrm D,\mathrm{min}}^{2}$ and $d_{\mathrm D,\mathrm{max}}^{2}$  in \eqref{op2_final} are variables.
  Considering the randomness of the positions of R in Fig. \ref{SR} and D in Fig. \ref{RS},
  the accurate expression of ${\rm{OP}_2} $ can be acquired by integration, which is too complex to get a closed-form expression.
 But the ranges of $H_\mathrm R$, $d_{\mathrm D,\mathrm{min}}^{2}$ and $d_{\mathrm D,\mathrm{max}}^{2}$ are relatively very small compared to the values of them. Thus, considering the randomness of the positions of R
in Fig. \ref{SR} and D in Fig. \ref{RS},
the approximation of ${\rm{OP}_2} $ can be obtained by substituting  $H_\mathrm R$, $d_{\mathrm D,\mathrm{min}}^{2}$ and $d_{\mathrm D,\mathrm{max}}^{2}$  in their medians.

\begin{figure}
\setlength{\abovecaptionskip}{0pt}
\setlength{\belowcaptionskip}{10pt}
\centering
\includegraphics[width=2.4 in]{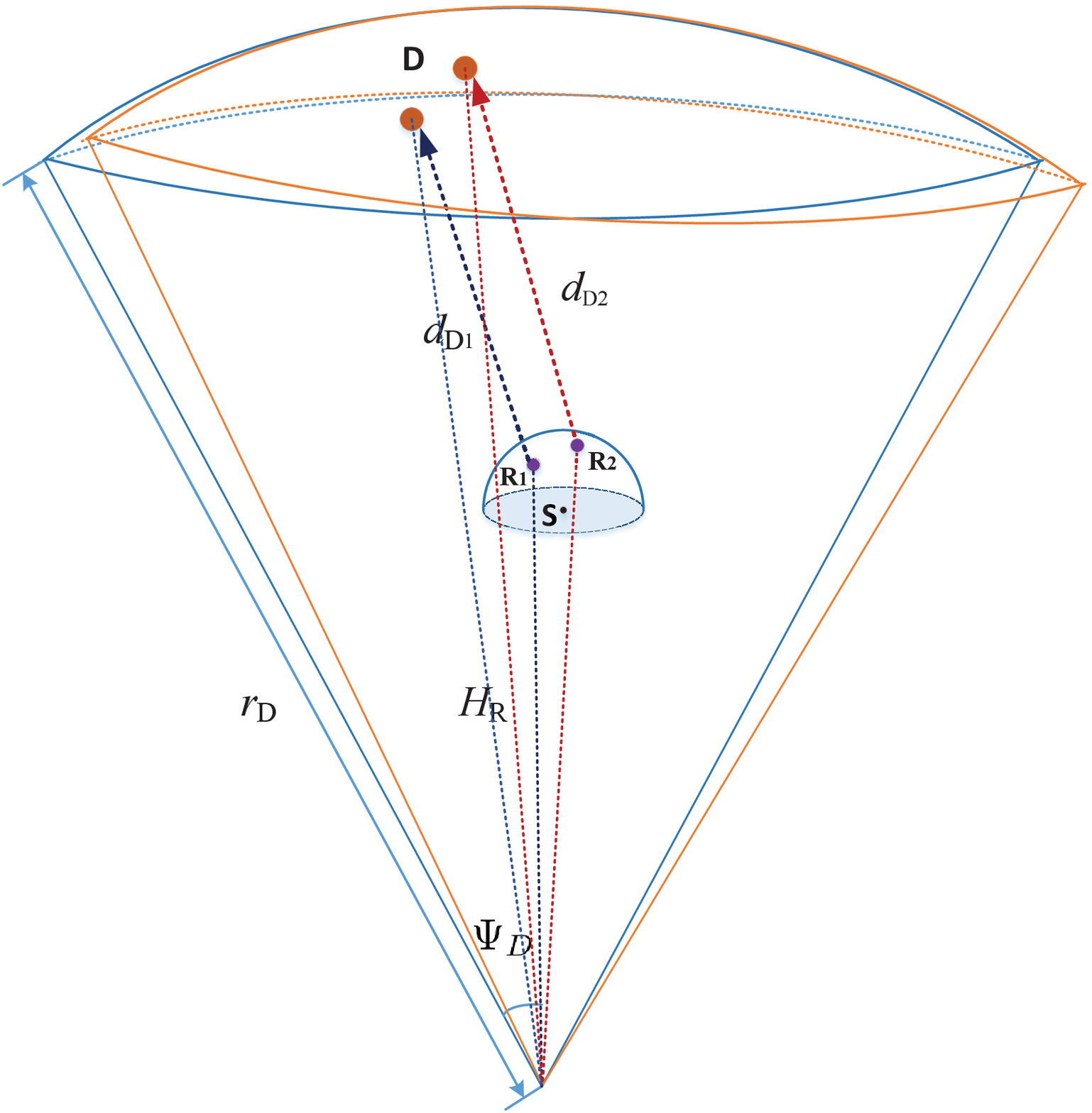}
\caption{The impact of R's height on $\mathrm{SOP}_{1}$ and $\mathrm{OP_{2}}$}
\label{compare}
\end{figure}

Fig. \ref{compare} gives an explanation for the approximation of $\mathrm{OP_{2}}$.  Own to the randomness of R, the height of R is in the range of $H_{\mathrm{min}}\le h_\mathrm{R}\le R_\mathrm{S}$, so  $H_\mathrm R$, $d_{\mathrm D,\mathrm{min}}$ and $d_{\mathrm D,\mathrm{max}}$ in \eqref{op2_final} are not fixed. The range of $H_\mathrm R$,  $d_{\mathrm D,\mathrm{min}}^{2}$ and $d_{\mathrm D,\mathrm{max}}^{2}$ are
$R_{\mathrm{earth}}+H_{\mathrm{min}}\leq H_\mathrm{R}\leq R_{\mathrm{earth}}+R_\mathrm{S}$,
$\ensuremath{\left(H_\mathrm{D}-R_\mathrm{S}\right)^{2}\text{\ensuremath{\le}}d_{\mathrm D,\mathrm{min}}^{2}\text{\ensuremath{\le}}
\left(H_\mathrm{D}-H_{\mathrm{min}}\right)^{2}}$
 and
 $r_\mathrm{D}^{2}+\left(R_{\mathrm{earth}}+R_\mathrm{S}\right)^{2}-2r_\mathrm{D}\left(R_{\mathrm{earth}}+R_\mathrm{S}\right)\cos\Psi_\mathrm{D}\leq d_{\mathrm D,\mathrm{max}}^{2}\leq r_\mathrm{D}^{2}+\left(R_{\mathrm{earth}}+H_{\mathrm{min}}\right)^{2}-2r_\mathrm{D}\left(R_{\mathrm{earth}}+H_{\mathrm{min}}
 \right)\cos\Psi_\mathrm{D}$, respectively.

Generally, the height of the low Earth orbit satellite is on the order of hundreds of kilometers and the height of the geostationary Earth orbit satellite is more than thirty thousand kilometers. Thus, compared to the height of the satellite, the height of R and the coverage space of S are quite small. Thus, the variation of  $d_{\mathrm D,\mathrm{min}}^{2}$  can be omitted. Similarly, the radius of the Earth is much longer than the height of R or the coverage space of S, the variation of  $d_{\mathrm D,\mathrm{max}}^{2}$ can also be ignored. Therefore, it is reasonable to take the medians of $H_\mathrm R$, $d_{\mathrm D,\mathrm{min}}^{2}$ and $d_{\mathrm D,\mathrm{max}}^{2}$  as their value to get the approximation of ${\rm{OP}_2} $, while considering the randomness of the positions of R in Fig. \ref{SR} and D in Fig. \ref{RS}.

We give a sample with $H_\mathrm{D}=550$ km, $R_\mathrm{S}=150$ m, $H_{\mathrm{min}}=80$ m, $\Psi_\mathrm{D}=\frac{\pi}{6}$.
 The rate of change for  $H_\mathrm R$, ${d}_{\mathrm D,\mathrm{min}}^{2}$ and ${d}_{\mathrm D,\mathrm{max}}^{2}$
are
$\Delta_{0}=\frac{\max\left\{ H_\mathrm{R}\right\} -\min\left\{ H_\mathrm{R}\right\} }{\min\left\{ H_\mathrm{R}\right\} }=0.00001$,
$\Delta_{1}=\frac{\max\left\{ d_{\mathrm D,\mathrm{min}}^{2}\right\} -\min\left\{ d_{\mathrm D,\mathrm{min}}^{2}\right\} }{\min\left\{ d_{\mathrm D,\mathrm{min}}^{2}\right\} }=0.00025$, and
$\Delta_{2}=\frac{\max\left\{ d_{\mathrm D,\mathrm{max}}^{2}\right\} -\min\left\{ d_{\mathrm D,\mathrm{max}}^{2}\right\} }{\min\left\{ d_{\mathrm D,\mathrm{max}}^{2}\right\} }=0.000004$, respectively. Thus, the approximation of ${\rm{OP}_2} $ is reasonable.

\section{}
Based on the above analysis, $\mathrm{SOP}_{1}$ and $\mathrm{OP_{2}}$ are not independent because the randomness of R leads to the variation of $d_\mathrm R$  in \eqref{cdf_dR},  $d_\mathrm D^2$ in \eqref{pdf_dSR2} and  the  variation of $H_\mathrm R$, $d_{\mathrm D,\mathrm{min}}^{2}$ and $d_{\mathrm D,\mathrm{max}}^{2}$ in \eqref{op2_final}. When we adopt $\bar H_\mathrm R$, $\bar d_{\mathrm D,\mathrm{min}}^{2}$ and $\bar d_{\mathrm D,\mathrm{max}}^{2}$ in \eqref{op2_final_app}, we get rid of the correlation of $\mathrm{SOP}_{1}$ and $\mathrm{OP_{2}}$ with sufficient reasons.
This also implies that the randomness of R has some effects on  $\mathrm{SOP}_{1}$ but has little effect on $\mathrm{OP_{2}}$. Thus, $\mathrm{SOP}_{1}$ and $\mathrm{OP_{2}}$ can be viewed
as the independent variable for simplification.

% Generated by IEEEtran.bst, version: 1.14 (2015/08/26)

\begin{IEEEbiography}
[{\includegraphics[width=1in,height=1.25in,clip,keepaspectratio]{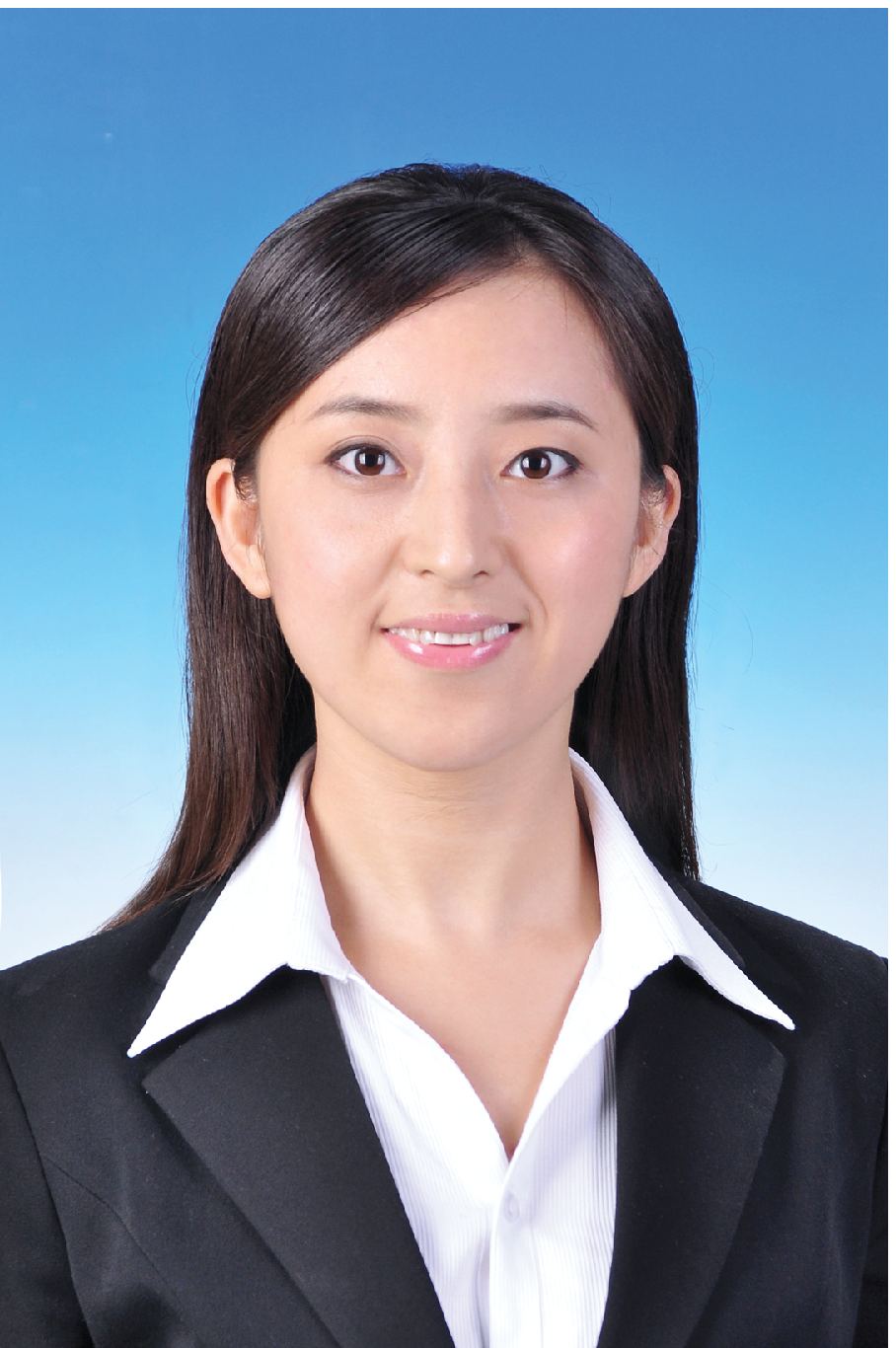}}]{Yuanyuan Ma}
 received the B.Sc  degree in communication engineering from Henan Normal University, China, in 2011, and the M.S. degree in information and communication engineering from Beijing Institute of Technology, China, in 2014. She is currently pursuing the Ph.D. degree in information and communication engineering from Beijing University of Posts and Telecommunications, China. Her current research interests include satellite communication, cognitive radio networks and performance analysis.
\end{IEEEbiography}

\begin{IEEEbiography}[{\includegraphics[width=1in,height=1.25in,clip,keepaspectratio]{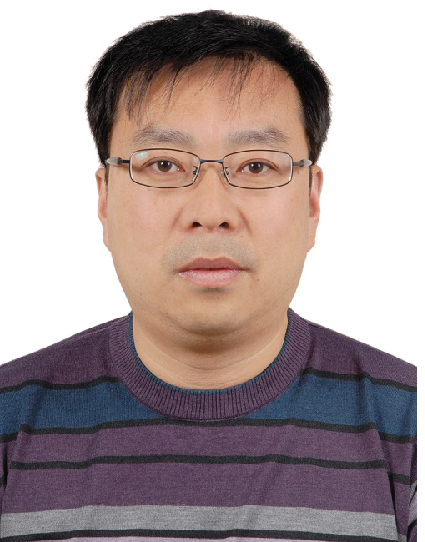}}]{Tiejun Lv}
(M'08-SM'12) received the M.S. and Ph.D. degrees in electronic engineering from the University of Electronic Science and Technology of China (UESTC), Chengdu, China, in 1997 and 2000, respectively. From January 2001 to January 2003, he was a Postdoctoral Fellow with Tsinghua University, Beijing, China. In 2005, he was promoted to a Full Professor with the School of Information and Communication Engineering, Beijing University of Posts and Telecommunications (BUPT). From September 2008 to March 2009, he was a Visiting Professor with the Department of Electrical Engineering, Stanford University, Stanford, CA, USA. He is the author of three books, more than 100 published IEEE journal papers and 200 conference papers on the physical layer of wireless mobile communications. His current research interests include signal processing, communications theory and networking. He was the recipient of the Program for New Century Excellent Talents in University Award from the Ministry of Education, China, in 2006. He received the Nature Science Award in the Ministry of Education of China for the hierarchical cooperative communication theory and technologies in 2015.
\end{IEEEbiography}

\begin{IEEEbiography}
[{\includegraphics[width=1in,height=1.25in,clip,keepaspectratio]{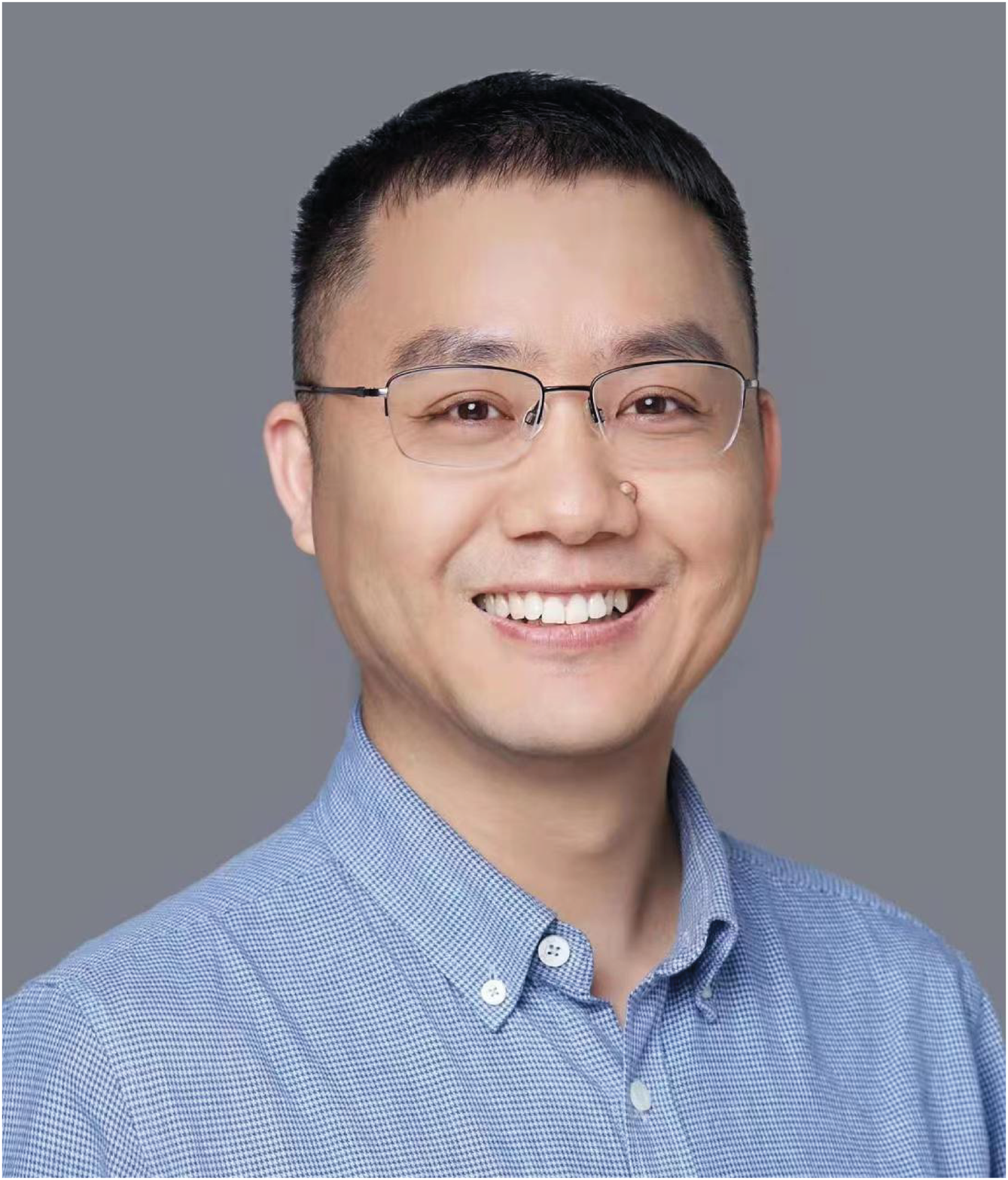}}]{Gaofeng Pan}
(Senior Member, IEEE) received
his B.Sc in Communication Engineering from
Zhengzhou University, Zhengzhou, China, in 2005,
and the Ph.D. degree in Communication and Information Systems from Southwest Jiaotong University,
Chengdu, China, in 2011. He is currently with the
School of Cyberspace Science and Technology, Beijing Institute of Technology, China, as a Professor.
His research interest spans special topics in communications theory, signal processing, and protocol
design.
\end{IEEEbiography}

\begin{IEEEbiography}
[{\includegraphics[width=1in,height=1.25in,clip,keepaspectratio]{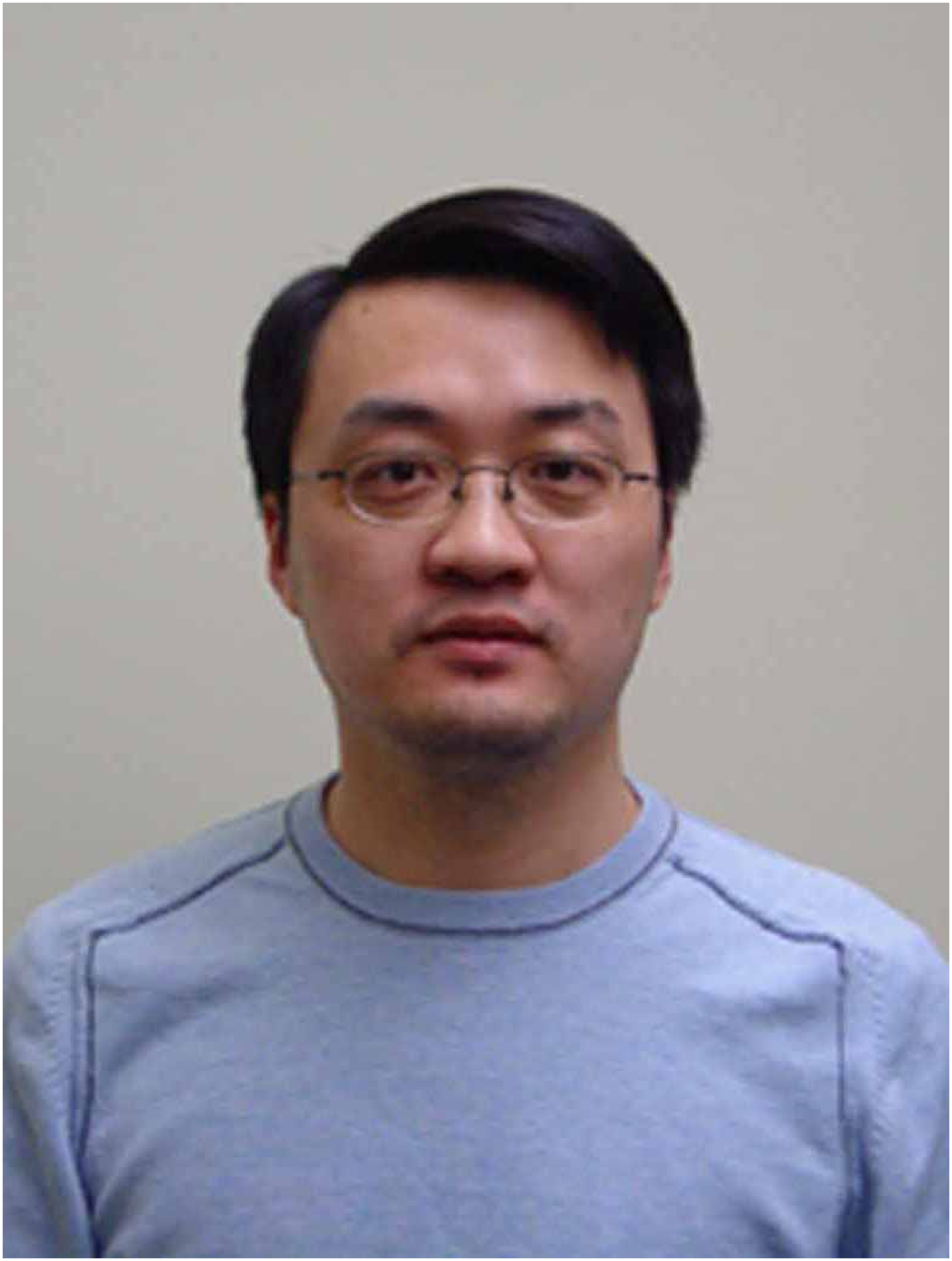}}]{Yunfei Chen}
(S’02-M’06-SM’10) received his B.E. and M.E. degrees in electronics engineering from Shanghai Jiaotong University, Shanghai, P.R.China, in 1998 and 2001, respectively. He received his Ph.D. degree from the University of Alberta in 2006. He is currently working as an Associate Professor at the University of Warwick, U.K. His research interests include wireless communications, cognitive radios, wireless relaying and energy harvesting.
\end{IEEEbiography}

\begin{IEEEbiography}[{\includegraphics[width=1in,height=1.25in,clip,keepaspectratio]{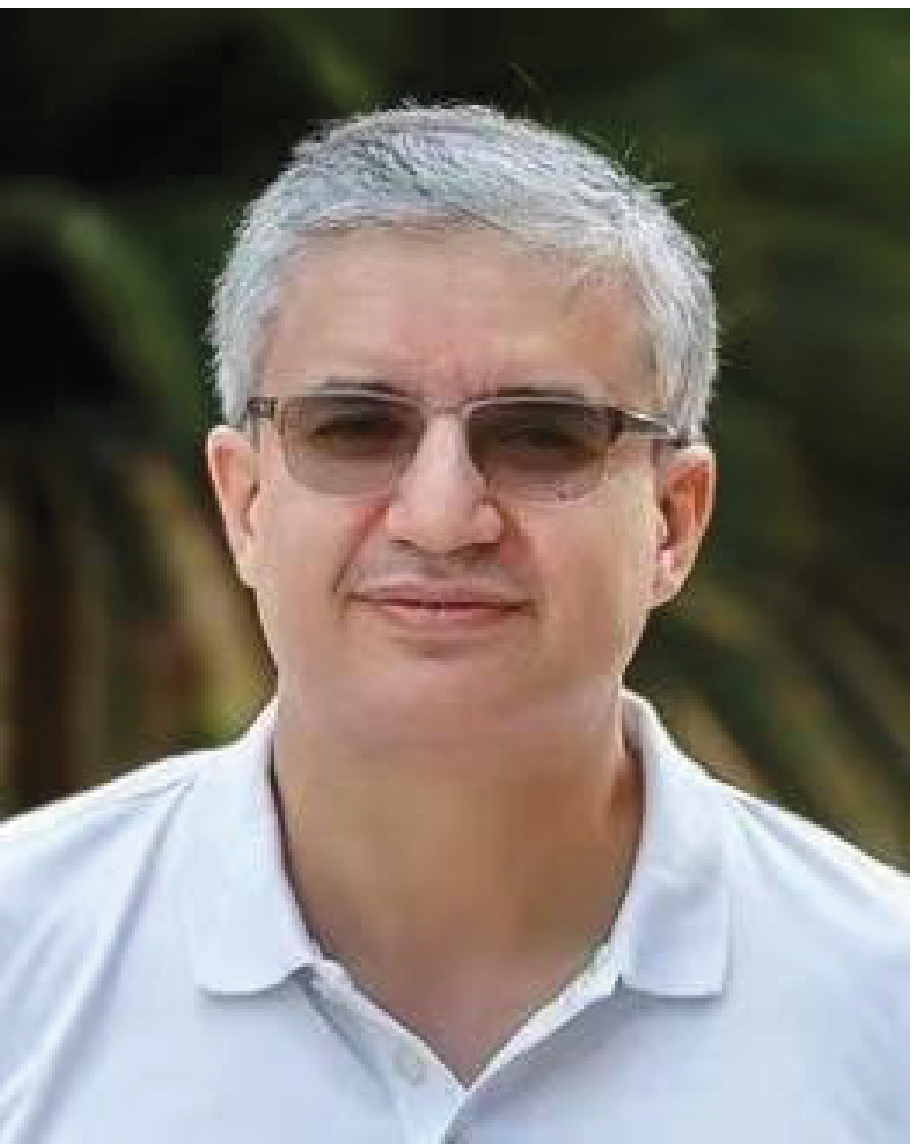}}]{Mohamed-Slim Alouini }(Fellow, IEEE) was born in Tunis, Tunisia. He received the Ph.D. degree in Electrical Engineering from the California Institute of Technology (Caltech), Pasadena, CA, USA, in 1998. He served as a faculty member in the University of Minnesota, Minneapolis, MN, USA, then in the Texas A\&M University at Qatar, Education City, Doha, Qatar before joining King Abdullah University of Science and Technology (KAUST), Thuwal, Makkah Province, Saudi Arabia as a Professor of Electrical Engineering in 2009. His current research interests include modeling, design, and performance analysis of wireless communication systems.
\end{IEEEbiography}

\end{document}